\definecolor{DarkBlue}{rgb}{0.1,0.1,0.5}
\definecolor{DarkGreen}{rgb}{0.1,0.5,0.1}
\renewcommand*{\backref}[1]{}
\renewcommand*{\backrefalt}[4]{%
    \ifcase #1 (Not cited.)%
    \or        (Cited on page~#2)%
    \else      (Cited on pages~#2)%
    \fi}
\newtheorem{theorem}{Theorem}
\newtheorem{proposition}{Proposition}
\newtheorem{lemma}{Lemma}
\newtheorem{example}{Example}
\newcommand{\EF}{\mathrm{EF}}
\newcommand{\EFone}{\mathrm{EF1}}
\newcommand{\bEFone}{\mathrm{\mathbf{EF1}}}
\newcommand{\EFX}{\mathrm{EFX}}
\newcommand{\bEFX}{\mathrm{\mathbf{EFX}}}
\newcommand{\EFL}{\mathrm{EFL}}
\newcommand{\bEFL}{\mathrm{\mathbf{EFL}}}
\newcommand{\NSW}{\mathrm{NSW}}
\newcommand{\ALG}{\textsc{Alg}}
\newcommand{\MMS}{\mathrm{MMS}}
\newcommand{\CMMS}{\mathrm{CMMS}}
\newcommand{\bMMS}{\mathrm{\mathbf{MMS}}}
\DeclareMathOperator*{\argmax}{arg\,max}
\DeclareMathOperator*{\argmin}{arg\,min}
\begin{document}
\title{{\bfseries Fair Division Under Cardinality Constraints}}%\footnote{Accepted at the 27th International Joint Conference on Artificial Intelligence (IJCAI), 2018}}
\author{Arpita Biswas\thanks{Indian Institute of Science. {\tt arpitab@iisc.ac.in}}
\and Siddharth Barman\thanks{Indian Institute of Science. {\tt barman@iisc.ac.in}}
}
\date{}
\maketitle

\begin{abstract}
We consider the problem of \emph{fairly} allocating {indivisible} goods, among agents, under {cardinality constraints} and {additive} valuations. In this setting, we are given a partition of the entire set of goods---i.e., the goods are categorized---and a limit is specified on the number of goods that can be allocated from each category to any agent. The objective here is to find a \emph{fair} allocation in which the subset of goods assigned to any agent satisfies the given cardinality constraints. This problem naturally captures a number of resource-allocation applications, and is a generalization of the well-studied (unconstrained) fair division problem.  

The two central notions of fairness, in the context of fair division of indivisible goods, are \emph{envy freeness up to one good} ($\EFone$) and the (approximate) \emph{maximin share guarantee} ($\MMS$). We show that the existence and algorithmic guarantees established for these solution concepts in the unconstrained setting can essentially be achieved under cardinality constraints. Specifically, we develop efficient algorithms which compute $\EFone$ and approximately $\MMS$ allocations in the constrained setting. 

%We also study fairness in terms of \textit{envy freeness up to the least valued good} ($\EFX$), a notion stronger than $\EFone$. While the question of existence of $\EFX$ allocations remains open in the unconstrained setting, we show that such allocations are not guaranteed to exist under cardinality constraints. 

Furthermore, focusing on the case wherein all the agents have the same additive valuation, we establish that $\EFone$ allocations exist and can be computed efficiently even under \emph{laminar matroid constraints}. 
\end{abstract}

\section{Introduction}
A large body of recent work in algorithmic game theory, artificial intelligence, and computational social choice has been directed towards understanding the problem of allocating indivisible goods among agents in ``fair'' manner; see, e.g.,~\cite{brandt2016handbook} and~\cite{endriss2017trends} for excellent expositions. This recent focus on indivisible goods is motivated, in part, by applications (such as division of inheritance and partitioning computational resources in a cloud computing environment) which inherently entail allocation of goods/resources that cannot be fractionally allocated. In fact, algorithms developed for finding fair allocations of indivisible goods have been implemented in specific settings; for instance, Course Match~\cite{budish2016course} is employed for course allocation at the Wharton School in the University of Pennsylvania and the website Spliddit ({\tt www.spliddit.org})~\cite{spliddit} provides free, online access to fair division algorithms. 

Note that, though the theory of fair division is extensive, classical notions of fairness---such as \emph{envy freeness}\footnote{An allocation is said to be envy-free if each agent values her bundle at least as much as she values any other agent's bundle\cite{foley1967resource,varian1974equity}.}---typically address allocation of divisible goods (i.e., goods which can be fractionally allocated) and are not representative in the indivisible setting. For instance, while an envy-free allocation of divisible goods is guaranteed to exist~\cite{stromquist1980cut}, such an existence result does not hold for indivisible goods.\footnote{If we have a single indivisible good and two agents, then in any allocation, the losing agent is bound to be envious.}

Motivated by these considerations, recent results have formulated and studied solution concepts to address fair division of indivisible goods~\cite{budish2011combinatorial,procaccia2014fair,bouveret2014characterizing}. Arguably, the two most prominent notions of fairness in this context are (i) \emph{envy freeness up to one good} ($\EFone$) and (ii) the \textit{maximin share guarantee} ($\MMS$). These solution concepts were defined by Budish~\cite{budish2011combinatorial} and they, respectively, provide a cogent analogue of envy-freeness and \emph{proportionality}\footnote{An allocation is said to be proportionally fair among $n$ agents, if every agent gets a bundle of value at least $1/n$ times her value for the entire set of goods.} in the context of indivisible goods: 

\begin{enumerate}[label=(\roman*)]
\item An allocation is said to be $\EFone$ if every agent values her bundle at least as much as any other agent's bundle, up to the removal of the most valuable good from the other agent's bundle. $\EFone$ allocations are guaranteed to exist; by contrast, for indivisible goods envy-free allocations might not exist. Another attractive feature of $\EFone$ is that it is computationally tractable: even under combinatorial valuations, $\EFone$ allocations can be found efficiently~\cite{lipton-envy-graph}. Furthermore, under additive valuations, this notion of fairness is compatible with \emph{Pareto efficiency}~\cite{caragiannis2016unreasonable}.

\item An allocation is said to satisfy $\MMS$ if each agent receives a bundle of value at least as much as her \emph{maximin share}. These shares provide an agent-specific fairness threshold, and are defined as the maximum value that an agent can guarantee for herself if she were to partition the set of goods into $n$ bundles and then, from those bundles, receive the minimum valued one; here, $n$ is the total number of agents. That is, maximin share can be interpreted via an application of the standard \emph{cut-and-choose} protocol over indivisible goods: if agent $i$ is (hypothetically) asked to partition the set of goods into $n$ bundles and the remaining $(n-1)$ agents were to select their bundles before $i$, then a risk-averse agent would find a partition which maximizes the least valued bundle. Overall, this value that the agent $i$ can guarantee for herself is called the maximin share of $i$. Even though $\MMS$ allocations are not guaranteed to exist~\cite{procaccia2014fair,kurokawa2016can}, this notion admits efficient approximation guarantees. Specifically, under additive valuations, there exist polynomial-time algorithms for finding allocations wherein each agent receives a bundle of value at least $2/3$ times her maximin share~\cite{procaccia2014fair,amanatidis2015approximation,barman2017approximation}.\footnote{For additive valuations, Ghodsi et al.~\cite{ghodsi2017fair} provide an improved approximation guarantee of $3/4$.} Note that these results establish an absolute guarantee, i.e., they show that an approximately maximin fair allocation always exists. 

\end{enumerate}

The existence and computational results developed for $\EFone$ and $\MMS$ provide a sound understanding of fair division of indivisible goods. However, it is relevant to note that the vast majority of work in this thread of research is solely focussed on the unconstrained version of the problem.\footnote{The work of Bouveret et al.~\cite{fair-graph} along with \cite{gourves2017approximate} and \cite{gourves2014near} are notable exceptions. These results are discussed in Section \ref{sec:related}.} To address this limitation, and motivated by the fact that in many real-world settings the allocations are required to satisfy certain criteria, we study a  relevant, constrained version of the fair division problem.  

In particular, we consider a setting wherein the indivisible goods are categorized (i.e., we are given a partition of the set of goods) and a limit is specified on the number of goods that can be allocated from each category to any agent. Here, the objective is to find a fair allocation in which the subset of goods assigned to any agent satisfies the given cardinality constraints. We shall see that this corresponds to a fair allocation problem under a \emph{partition matroid constraint}.% for the partition matroid.

The following stylized example---adapted from \cite{gourves2014near}---demonstrates the applicability of such constraints: A museum decides to open new branches, and thereby needs to transfer some of the exhibits from the main museum to the newly opened ones. The exhibits are categorized into, say, statues, paintings, and pottery. In addition, there is an upper limit on the number of exhibits that every newly opened branch can accommodate from each category. The question now is to find a feasible division of the exhibits which is fair to the curators of each of the new branches.\\

\noindent 
{\bf Our Contributions:} We show that the existence and algorithmic guarantees established for $\EFone$ and $\MMS$ allocations in the unconstrained setting can, in fact, be achieved under cardinality constraints as well. Specifically, we establish the following results under additive valuations and cardinality (partition matroid) constraints:

\begin{enumerate}
\item $\EFone$ allocations are guaranteed to exist. In particular, we develop a combinatorial algorithm which, for a given fair division instance with additive valuations and cardinality constraints, finds an $\EFone$ allocation in polynomial time (Theorem~\ref{theorem:EF1}).

\item In this constrained setting, a constant-factor approximate $\MMS$ allocation always exists and can be computed in polynomial time (Theorem~\ref{theorem:MMS}). Note that, in this setting, the value of the maximin share of each of the $n$ agents is obtained by considering only feasible allocations (see Equation (\ref{def:CMMS}) in Section~\ref{sec:prelims}). That is, here, the maximin share of an agent is defined to be maximum value that she can guarantee for herself if she were to partition the set of goods into $n$ bundles, each of which must satisfy the cardinality constraints, and from them receive the minimum valued one. 

\item We also consider fair division subject to laminar  matroid constraints (Section~\ref{sec:identical}) and show that if the agents have identical, additive valuations, then again an $\EFone$ allocation is guaranteed to exist (Theorem~\ref{theorem:identical}) and can be computed efficiently (Theorem~\ref{theorem:identical_algo}). 

\item Specific extensions of $\EFone$---in particular, \textit{envy-free up to the least valued good} ($\EFX$)\footnote{An allocation is said to be {envy-free up to the least valued good} ($\EFX$) if no agent envies any other agent after removing any positively valued good from the other agent's bundle \cite{caragiannis2016unreasonable}.} and \textit{envy-free up to one least-preferred good} ($\EFL$)\footnote{See Section~\ref{sec:EFL} for a definition.}---are not guaranteed to exist under cardinality constraints. In Section~\ref{sec:EFL} we provide a (constrained) fair division instance which does not admit an $\EFX$ or an $\EFL$ allocation.
%Specifically, we construct a (constrained) fair division instance which does not admit an $\EFX$ or an $\EFL$ allocation.

%Stronger versions of $\EFone$---in particular, $\EFX$\footnote{An allocation is said to be \textit{envy-free up to the least valued good} ($\EFX$) if no agent envies any other agent after removing any positively valued good from the other agentÕs bundle \cite{caragiannis2016unreasonable}.} and $\EFL$\footnote{An allocation is said to be  \textit{envy-free up to one least-preferred good} ($\EFL$) if the envy between any pair of agents, $i$ and $j$, can be resolved by removing any good of value more than $i$'s bundle from $j$'s subset.\cite{barman2018groupwise}.}---are not guaranteed to exist under cardinality constraints (Section~\ref{sec:EFL}). In particular, we provide a (constrained) fair division instance which does not admit an $\EFX$ or an $\EFL$ allocation. 

\end{enumerate}

\subsection{Related Work} \label{sec:related}
The fairness notions of $\EFone$ and $\MMS$ were defined by Budish~\cite{budish2011combinatorial} (see also~\cite{moulin1990uniform}), and these solution concepts have been extensively studied since then. The existence of $\EFone$ allocations can be established via the \emph{cycle-elimination} algorithm of Lipton et al.~\cite{lipton-envy-graph}. Caragiannis et al.~\cite{caragiannis2016unreasonable} have shown that these two solution concepts are incomparable, i.e., one does not imply the other. 

%The solution concept $\EFX$ was introduced in \cite{caragiannis2016unreasonable} and the existence of $\EFX$ remains an open question in (unconstrained) fair division instances under additive valuations. In this work, we show that $\EFX$ is not guaranteed to exist under cardinality constraints. This non-existence result also hold for \textit{envy free up to one less-preferred good} $\EFL$ allocations, introduced in \cite{barman2018groupwise}, which is a solution concept weaker than $\EFX$, but stronger than $\EF1$.

Fair division instances wherein the agents' valuations are binary and additive always admit $\MMS$ allocations~\cite{bouveret2014characterizing}. However, Procaccia and Wang~\cite{procaccia2014fair} along with Kurokawa et al.~\cite{kurokawa2016can} provided intricate counterexamples to refute the universal existence of $\MMS$ allocations, even under additive valuations. This lead to a study of approximate maximin share allocations, $\alpha$-$\MMS$, where each agent receives a bundle whose value to her is at least $\alpha \in (0,1)$ times her maximin share. Procaccia and Wang~\cite{procaccia2014fair} developed an efficient algorithm to obtain a $2/3$-$\MMS$ allocation when the number of agents is constant. Later, Amanatidis et al.~\cite{amanatidis2015approximation} established that a $2/3$-$\MMS$ allocation can be computed in polynomial (even in the number of players) time.

More recently, constant-factor approximation guarantees for $\MMS$ in  settings wherein the valuations are not necessarily additive have been established in~\cite{barman2017approximation} and~\cite{ghodsi2017fair}. Specifically, for \emph{submodular} valuations Ghodsi et al.~\cite{ghodsi2017fair} have developed an efficient algorithm for computing $1/3$-$\MMS$ allocations.  

%Comparative notions of fairness which strengthen $\EFone$ have also been considered in unconstrained settings. The most demanding such solution concept is envy-free up to the least valued good ($\EFX$), which was introduced in \cite{caragiannis2016unreasonable}. While the existence of $\EFX$ allocations in unconstrained, fair division instances (with additive valuations) remains an interesting, open question, we show that $\EFX$ allocations are not guaranteed to exist under cardinality constraints. In fact, such a non-existence result also holds for \textit{envy-free up to one less-preferred good} ($\EFL$) allocations; defined in \cite{barman2018groupwise}, this solution concept is weaker than $\EFX$, but stronger than $\EF1$.

As mentioned previously, the work on fair division of indivisible goods is primarily confined to the unconstrained setting. Exceptions include the work of Bouveret et al.~\cite{fair-graph} and Ferraioli et al.~\cite{ferraioli2014regular}.  Bouveret et al.~\cite{fair-graph} consider fair division of goods which correspond to vertices of a given graph, and the problem is to fairly allocate a connected subgraph to each agent. It is shown in \cite{fair-graph} that $\MMS$ allocations might not exist for general graphs; however, it can be efficiently computed when the underlying graph is a tree. 

Ferraioli et al.~\cite{ferraioli2014regular} consider fair division problems where each agent must receive exactly $k$ goods, for a given integer $k$, and  provide an algorithm to efficiently compute a $1/k$-approximate $\MMS$ allocation. A different kind of matroid constraint is considered in \cite{gourves2017approximate} and \cite{gourves2014near}. In particular, the problem formulation in~\cite{gourves2017approximate} and~ \cite{gourves2014near} requires that the \emph{union} of all the allocated goods is an independent set of a given matroid. This setup is incomparable to the one considered in this paper. Specifically, while we ensure that in the computed fair allocation \emph{each} agent's bundle satisfies the partition matroid constraint and all the goods are allocated, these requirements are not imposed in \cite{gourves2017approximate} and \cite{gourves2014near}.

\section{Notation and Preliminaries}\label{sec:prelims}
An instance of the fair division problem comprises of a tuple $\left\langle[m],[n],(v_i)_{i\in[n]}\right\rangle$, where $[m]=\{1,2,\ldots,m\}$ denotes the set of indivisible goods, $[n]=\{1,2,\ldots, n\}$ denotes the set of agents, and $v_i$s specify the valuation (preferences) of the agents, $i \in [n]$, over the set of goods. Throughout, we will assume that, for each agent $i$, the valuation $v_i:2^{[m]} \mapsto \mathbb{R}_+$ is additive, i.e., for all agents $i \in [n]$ and subsets $S \subseteq [m]$, $v_i(S) = \sum_{g \in S}\ v_i \left (\{g\} \right)$. Also, we have $v_i\left(\{g\}\right)\geq 0$ for all $i \in [n]$ and $g\in [m]$. For ease of notation, we will use $v_i(g)$, instead of $v_i(\{g\})$, to denote the valuation of agent $i$ for a good $g$.

Write $\Pi_t(S)$ to denote the set of all $t$-partitions of a subset of goods $S \subseteq [m]$. An \textit{allocation}, $\mathcal{A}$, refers to an $n$-partition $\mathcal{A}=(A_1,A_2,\dots,A_n) \in \Pi_n([m])$, where $A_i$ is the subset of goods (bundle) allocated to agent $i$. In this work we focus on finding fair allocations which satisfy given cardinality constraints. Specifically, we are given a partition of the set of goods consisting of $\ell$ different ``categories'' $\{C_1, C_2, \ldots, C_\ell\}$, and associated with each category $h \in \{1,2,\ldots, \ell\}$ we have a (cardinality) threshold $k_h$. In this setup, an allocation $\mathcal{A}=(A_1,A_2,\dots,A_n)$ is said to be \emph{feasible} iff, for every bundle $A_i$ and category $h$, the cardinality constraint holds: $|A_i \cap C_h| \leq k_h$. Throughout, we will use $\mathcal{F}$ to denote the set of feasible allocations, $\mathcal{F}:=\{\mathcal{A}\in\Pi_{n}([m]) \ \mid \ |A_i \cap C_h| \leq k_h \text{ for all }  i\in[n] \text{ and } h\in[\ell]\}$. To ensure that $\mathcal{F}$ is nonempty, we require that the threshold $k_h\geq \frac{|C_h|}{n}$ for all categories $h\in[\ell]$. 

We will use $\left\langle[m],[n],(v_i)_{i\in[n]}, \mathcal{F} \right\rangle$ to denote an instance of the fair division problem subject to cardinality constraints. Overall, our goal is to find fair allocations contained in $\mathcal{F}$.\footnote{Note that the set $\mathcal{F}$ might be exponential in size, but it is specified in an efficient manner via the partition $\{C_1, C_2, \ldots, C_\ell\}$ and  thresholds $k_h$. Setting $\ell$=$1$, $C_1$=$ [m]$, $k_1$=$m$, we get $\mathcal{F}$=$\Pi_n([m])$. Hence, this formulation is a strict generalization of the unconstrained fair division problem.} In this work, we provide existential and algorithmic results for the following fairness notions:
\begin{itemize}
\item {Envy-free up to one good} ($\EFone$): In a fair division instance $\left\langle[m],[n],(v_i)_{i\in[n]}, \mathcal{F} \right\rangle$, an allocation $\mathcal{A}=(A_1,A_2,\dots,A_n)\in\mathcal{F}$ is said to be $\EFone$ iff for every pair of agents $i , j \in [n]$ there exists a good $g \in A_j$ such that $v_i(A_i) \geq v_i(A_j\setminus\{g\})$.
\item {Maximin Share Guarantee} ($\MMS$): Given an instance $\left\langle [m],[n],(v_i)_{i\in[n]}, \mathcal{F} \right\rangle$, the (constrained) maximin share of agent $i$ is defined as 
\begin{align}
\CMMS_i &:= \max_{(P_1,\ldots, P_n)\in \mathcal{F}}\  \min_{j\in [n]}\ v_i(P_j).\label{def:CMMS}
\end{align} 
An allocation $\mathcal{A}=(A_1,\ldots, A_n)\in\mathcal{F}$ is said satisfy $\MMS$ iff for all agents $i \in [n]$, we have $v_i(A_i) \geq \CMMS_i$. Since $\MMS$ allocations are not guaranteed to exist, the objective is to find feasible allocations $(A_1, \ldots, A_n) \in \mathcal{F}$ wherein each agent gets a bundle of value at least $\alpha \in (0,1]$ times $\CMMS_i$; with factor $\alpha$ being as large as possible. We call such allocations $\alpha$-$\MMS$. 
\end{itemize}

\section{Main Results}
The key results established in this paper are:

\begin{restatable}{theorem}{EFONE} 
%\begin{thm}
\label{theorem:EF1}
\label{THEOREM:EF1}
Given any fair division instance $\left\langle[m],[n],(v_i)_{i\in[n]}, \mathcal{F} \right\rangle$ with additive valuations and cardinality constraints ($\mathcal{F} \neq \emptyset$), there exists a polynomial time algorithm for finding a feasible $\EFone$ allocation.
%\end{thm}
\end{restatable}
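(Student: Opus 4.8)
The plan is to generalize the envy-cycle elimination algorithm of Lipton et al.~\cite{lipton-envy-graph} so that it respects the cardinality constraints. I would maintain a running \emph{partial} allocation $\mathcal{A}=(A_1,\ldots,A_n)$ together with the invariant that, at every stage, $\mathcal{A}$ is feasible (each $|A_i\cap C_h|\le k_h$) and $\EFone$. Goods are introduced one category at a time: first all of $C_1$, then all of $C_2$, and so on. The two obligations---preserving $\EFone$ and preserving feasibility---are discharged by two separate mechanisms. $\EFone$ is kept by only ever handing a newly introduced good to an agent who is currently \emph{unenvied}; feasibility is kept by distributing the goods of each category as evenly as possible, so that no agent ends up with more than $\lceil |C_h|/n\rceil$ goods from $C_h$. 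Since the hypothesis guarantees $k_h\ge |C_h|/n$ and $k_h$ is an integer, we have $\lceil |C_h|/n\rceil\le k_h$, and feasibility then follows automatically from the balancing.

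Concretely, the main loop is: while the current category $C_h$ still has an unallocated good, (i) run the cycle-elimination step---if the envy graph (a directed edge $i\to j$ whenever $v_i(A_i)<v_i(A_j)$) contains a directed cycle, rotate the bundles along it, which decreases a suitable potential such as the number of envy edges, so that after finitely many rotations the graph is acyclic; (ii) among the agents who are unenvied in the resulting acyclic graph, select one that is still below the cap for $C_h$, preferring an agent holding the fewest goods of $C_h$ so as to keep the category balanced; and (iii) give this agent an arbitrary unallocated good of $C_h$. The $\EFone$ step is the classical argument: if agent $p$ is unenvied then $v_i(A_i)\ge v_i(A_p)$ for every $i$, so after $p$ receives a single good $g$ we still have $v_i(A_i)\ge v_i\bigl((A_p\cup\{g\})\setminus\{g\}\bigr)$, i.e.\ $\EFone$ toward $p$ holds; the rotations in step (i) only replace an agent's bundle by one she values at least as much, and hence preserve $\EFone$ as well.

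The main obstacle---and the only place where the cardinality constraints genuinely bite---is justifying that step (ii) is always executable: that at every stage there is an agent who is simultaneously \emph{unenvied} and \emph{below the cap} in the current category. These two demands pull in opposite directions, since the unenvied agents (the sources of the acyclic envy graph) may already be saturated in $C_h$ while every below-cap agent is envied, and handing the good to an envied agent can destroy $\EFone$ (removing the just-added good merely undoes the addition, whereas the pre-existing envy need not have been up to one good). I would resolve this with a crux lemma proved by an argument internal to the acyclic envy graph. Maintaining the stronger invariant that the counts $|A_i\cap C_h|$ differ by at most one throughout the processing of $C_h$, the set $D$ of below-cap agents is exactly the set of current minimizers, and it is nonempty as long as goods remain (if all counts equalled $k_h$ the total would already reach $n k_h\ge |C_h|$). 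It then remains to show that either some source lies in $D$, or else a directed path or rotation in the envy graph can move a category-$h$ good from a saturated source to a member of $D$ without creating envy beyond one good, thereby exposing a below-cap source; a case analysis on whether the transferred good is valued by any potential envier completes the lemma.

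Finally, the running time is polynomial: cycle elimination performs at most polynomially many rotations between successive insertions, each step (building the envy graph, detecting and rotating a cycle, locating an unenvied below-cap recipient) is computable in time polynomial in $n$ and $m$, and there are exactly $m$ insertions, one per good.
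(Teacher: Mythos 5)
You correctly isolate the crux---at some point every unenvied agent (every source of the acyclic envy graph) may be saturated in the current category while every below-cap agent is envied---but your proposal never proves the ``crux lemma'' that is supposed to resolve it, and in this approach that lemma \emph{is} the theorem. The bad configuration is genuinely reachable under your own rules: take $n=2$, a single category $C_1=\{a,b,c,g\}$ with $k_1=2$, and identical valuations $v(a)=v(g)=10$, $v(b)=v(c)=1$. Your algorithm may give $a$ to agent $1$ (arbitrary tie-breaking); thereafter agent $2$ is the unique source and, being below cap, receives $b$ and then $c$; now agent $2$ is saturated yet still the unique source, the below-cap set is $D=\{1\}$ and its only member is envied, and the forced insertion of $g$ into $A_1=\{a\}$ gives $\min_{g'\in\{a,g\}} v_2(\{a,g\}\setminus\{g'\})=10>2=v_2(A_2)$, violating $\EFone$. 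So a repair step is unavoidable, and the repair you sketch---move a category-$h$ good from a saturated source to a member of $D$ ``without creating envy beyond one good,'' settled by ``a case analysis on whether the transferred good is valued by any potential envier''---is a hope, not an argument. Inserting a good into an \emph{envied} agent's bundle is exactly the operation that can destroy $\EFone$: the pre-existing envy of $i$ toward $d$ is witnessed by removing some old good $g''\in A_d$, and after the transfer the bundle $(A_d\cup\{g\})\setminus\{g''\}$ gains $v_i(g)$, so no single removal need suffice. Likewise, shifting bundles along a mere directed \emph{path} of the envy graph (as opposed to a cycle) is not value-nondecreasing for the last agent on the path, so neither of your invariants is preserved for free. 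In small examples an ad hoc transfer happens to work, but nothing in your sketch handles $n$ agents, heterogeneous valuations, and long envy chains; moreover the ``counts differ by at most one'' invariant you lean on cannot be maintained unconditionally---it fails precisely when the only unenvied agents are already at the maximum count---at which point $D$ is no longer ``exactly the set of current minimizers.''

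The paper sidesteps this dead end with a different decomposition: rather than inserting goods one at a time into sources, it allocates each category $C_h$ in one shot by a greedy round-robin whose agent order is a topological order of the (acyclic, after Lipton-style cycle elimination) envy graph accumulated from the previous categories. Round-robin automatically gives each agent at most $\lceil |C_h|/n\rceil\le k_h$ goods from $C_h$, so feasibility requires no recipient-selection lemma at all; and the key structural fact (the paper's Proposition~1) is that under round-robin an earlier agent never envies a later one, while the round-robin allocation is $\EFone$ on its own. Then, for any ordered pair: if $a$ envied $b$ before, $a$ precedes $b$ in the topological order, so $a$ values her new goods at least as much as $b$'s new goods and the old one-good bound persists; if $a$ did not envy $b$ before, the fresh round-robin $\EFone$ bound suffices. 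To salvage your one-good-at-a-time scheme you would have to prove your crux lemma in full generality, or restructure the process so that feasibility follows from the \emph{order} of insertions rather than constraining the choice of recipient---which is, in effect, exactly what the paper's per-category round-robin seeded by the topological order accomplishes.
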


This theorem is established in Section~\ref{sec:EF1} and it implies that as long as the set of feasible allocations $\mathcal{F}$ is nonempty it admits an $\EFone$ (i.e., a fair) allocation. 

%In Section \ref{sec:EFL}, we show the non-existence of stronger notions of $\EFone$, such as $\EFL$ and $\EFX$ allocations under cardinality constraints. 

\begin{restatable}{theorem}{APXMMS}
\label{theorem:MMS}
\label{THEOREM:MMS}
Given any fair division instance $\left\langle[m],[n],(v_i)_{i\in[n]}, \mathcal{F} \right\rangle$ with cardinality constraints ($\mathcal{F} \neq \emptyset$) and additive valuations, a $1/3$-$\MMS$ allocation can be computed in polynomial time.
\end{restatable}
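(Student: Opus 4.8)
The plan is to adapt the standard ``valid reduction plus bag-filling'' paradigm for maximin-share approximation, with the essential twist that every bundle I ever remove must keep the remaining instance feasible under the cardinality constraints. The target is a feasible allocation $\mathcal{A}\in\mathcal{F}$ with $v_i(A_i)\ge\frac{1}{3}\CMMS_i$ for every agent $i$. The single global estimate I rely on is the proportionality bound $\CMMS_i\le\frac{1}{n}v_i([m])$, immediate by averaging any feasible partition over its $n$ bundles; it guarantees that the total value in play is at least $n\,\CMMS_i$ in each agent's eyes. Since $\CMMS_i$ is NP-hard to compute exactly, I would only ever compare bundle values against a threshold proportional to $\CMMS_i$, calibrating that threshold by standard normalization/guessing arguments (together with an ``ordered instance'' reduction performed per category); the factor $\frac{1}{3}$ leaves ample slack to absorb the resulting estimation error.

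The core is a loop that peels off one satisfied agent at a time: find an agent $i$ and a feasible bundle $S$ with $v_i(S)\ge\frac{1}{3}\CMMS_i$, assign $S$ to $i$, and recurse on the $n-1$ remaining agents and goods $[m]\setminus S$. The subtle point, which I expect to be the main obstacle, is that this reduction must preserve both feasibility and the maximin shares of the survivors, and both are genuinely constraint-specific. Feasibility is the sharper issue: the nonemptiness condition $k_h\ge|C_h|/n$ does \emph{not} survive the deletion of an arbitrary good, since $k_h\ge|C_h|/n$ need not imply $k_h\ge(|C_h|-1)/(n-1)$ (e.g. $|C_h|=10$, $n=5$, $k_h=2$ leaves $9$ goods that no longer fit into $4$ bundles of capacity $2$). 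Consequently $S$ cannot be arbitrary; it must be \emph{balanced}, removing roughly $|C_h|/n$ goods from each category $C_h$, so that the leftover still fits into $n-1$ bundles of capacity $k_h$. Balancedness is then exactly what powers the accompanying monotonicity lemma: starting from agent $j$'s optimal feasible $n$-partition of $[m]$, deleting $S$ leaves $n$ feasible fragments of $[m]\setminus S$, and the per-category room freed by a balanced $S$ lets me merge them into $n-1$ feasible bundles without exceeding any cap, certifying that $j$'s maximin share does not drop in the reduced instance.

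To guarantee that a balanced bundle worth $\frac{1}{3}\CMMS_i$ is available at each step, I would argue from agent $i$'s own maximin partition $(P_1,\ldots,P_n)$, whose parts are feasible, each worth at least $\CMMS_i$, and balanced-per-category on average; a feasible re-bundling of these then yields a balanced set of value a constant fraction of $\CMMS_i$. When no single good is worth $\frac{1}{3}\CMMS_j$ to any survivor $j$, I instead run a category-aware bag-filling phase: grow a balanced bundle and stop the instant any surviving agent values it at her own threshold $\frac{1}{3}\CMMS_j$, awarding it to that agent. Stopping at the \emph{first} agent to cross, together with the small-goods property, keeps the awarded bundle worth strictly less than $\frac{2}{3}\CMMS_j<\CMMS_j$ to every survivor $j$, so the same monotonicity lemma applies and the recursion continues; a short accounting using $v_j([m])\ge n\,\CMMS_j$ shows enough value always remains to satisfy the last agent.

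The final piece is the feasibility bookkeeping that ties this together: the inequality $k_h\ge|C_h|/n$ supplies exactly $n$ balanced ``slots'' per category, which is what simultaneously lets me carve out balanced bundles throughout and, at termination, place every still-unallocated good so that $\mathcal{A}$ is a feasible partition of all of $[m]$. Since each reduction removes one agent there are at most $n$ rounds, each implementable greedily in polynomial time, yielding the claimed polynomial-time $\frac{1}{3}$-$\MMS$ algorithm. The crux throughout is the constraint-aware, balancedness-based monotonicity lemma; once it is in place, the value accounting mirrors the unconstrained analysis and the $\frac{1}{3}$ guarantee follows with room to spare.
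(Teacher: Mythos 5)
Your self-identified crux---the balancedness-based monotonicity lemma---is false as stated, already in the \emph{unconstrained} setting. Take $n=3$ and six goods valued $1-\delta,\,1-\delta,\,1-\delta,\,\delta,\,\delta,\,\delta$ by a surviving agent $j$: her maximin share is $1$ (pair each large good with a small one). Now remove a bag $S$ consisting of the three $\delta$-goods---worth only $3\delta$ to $j$, far below $\frac{2}{3}\CMMS_j$---together with one agent. The three remaining goods of value $1-\delta$ must be split between two agents, so $j$'s maximin share drops to $1-\delta<1$. Hence a bundle that every survivor values below $\frac{2}{3}\CMMS_j$ can still strictly decrease a survivor's share upon removal, and your step ``the same monotonicity lemma applies and the recursion continues'' does not go through; the standard unconstrained bag-filling analyses are careful about exactly this, fixing each agent's threshold once against her \emph{original} share and closing with global value accounting rather than any per-step monotonicity claim. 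Your merging argument also fails on its own terms under constraints: $S$ may intersect all $n$ bundles of $j$'s optimal feasible partition, dropping several fragments below $\CMMS_j$, and merging two capped fragments can exceed $k_h$, because the per-category room freed by a balanced $S$ need not be located in the two fragments you merge. (Your feasibility observation that $k_h\geq |C_h|/n$ does not survive deletions, and the resulting balancedness requirement $|S\cap C_h|\geq |C_h|-(n-1)k_h$, are correct---the failure is entirely on the value-preservation side.)

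Even repairing this with fixed thresholds and accounting, two holes remain: you never show how to compute or certify the thresholds $\frac{1}{3}\CMMS_i$ (``normalization/guessing'' does not obviously transfer, since the unconstrained calibration tricks exploit scale-invariance of decisions made against the proportionality bound), and the endgame---absorbing all leftover goods into a feasible partition while late agents retain $\frac{1}{3}\CMMS$ value---is asserted via ``balanced slots'' but is precisely where caps bite, e.g.\ when the remaining value is concentrated in one category above its cap. The paper avoids this entire minefield by a different route: it defines a capped valuation $F_i(S)=\sum_h f_i^h(S)$ counting only the top $k_h$ goods per category, proves $F_i$ is monotone submodular, shows by a swap argument (using $k_h\geq |C_h|/n$) that the \emph{unconstrained} maximin value under $F_i$ equals $\CMMS_i$, runs the known polynomial-time $1/3$-$\MMS$ algorithm of Ghodsi et al.\ for submodular valuations, and converts its output into a feasible allocation via the same swaps without decreasing any $F_i$. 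If you want to salvage a direct bag-filling argument, the constraint-aware ingredient you actually need is this capped-valuation viewpoint, not a per-step maximin-monotonicity lemma.
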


Analogous to the $\EFone$ case, this theorem provides an absolute, existence guarantee for approximate maximin fair allocations under cardinality constraints. A proof of this result appears in Section~\ref{sec:MMS}. 
 
We also establish that when the valuations are identical, then $\EFone$ allocations exist and are efficiently computable even under matroid constraints; see Theorem~\ref{theorem:identical} and Theorem~\ref{theorem:identical_algo} in Section~\ref{sec:identical}.

\section{$\bEFone$ Allocations Under Cardinality Constraints: Proof of Theorem~\ref{theorem:EF1}}\label{sec:EF1}
In the unconstrained setting, there exist efficient algorithms for finding $\EFone$ allocations; see, e.g., the \emph{cycle-elimination} algorithm of Lipton et al.~\cite{lipton-envy-graph} and the round-robin method by Caragiannis et al.~\cite{caragiannis2016unreasonable}. However, the allocations found by these algorithms are not guaranteed to satisfy the given cardinality constraints. We bypass this issue by developing a polynomial-time algorithm, $\ALG$~\ref{alg:EF1_partition}, for finding an allocation $\mathcal{A}$ which is not only $\EFone$, but also feasible, i.e., $\mathcal{A} \in \mathcal{F}$.

{
\begin{algorithm}
	{
		{{\bf Input :} A fair division instance $\left\langle[m],[n],(v_i)_{i},\mathcal{F}\right\rangle$ with additive valuations and cardinality constraints.\\
		{\bf Output:} A feasible $\EFone$ allocation}
		\caption{\ALG~\ref{alg:EF1_partition}
		}
		\label{alg:EF1_partition}
		\begin{algorithmic}[1]
		    \STATE Initialize allocation $\mathcal{A}^0=(A^0_1, \ldots, A^0_n)$ with $A^0_i~\leftarrow~\emptyset$ for each agent $i \in [n]$.	
		    \STATE Fix an (arbitrary) ordering of the agents $\sigma = \left(\sigma(1), \sigma(2),\ldots, \sigma(n)\right)$ %$\langle a \rangle:=\langle a_1,\ldots,a_n\rangle$.
			\FOR{$h =1 \mbox{ to } \ell$}	
				\STATE $\mathcal{B}^h\leftarrow$Greedy-Round-Robin$(C_h,[n],(v_i)_i, \sigma)$.
				\STATE Set $A^h_i\leftarrow A^{h-1}_i\cup B^h_i$ for all $i\in[n]$.
				\STATE Using Lemma \ref{lemma:envylipton}, update $\mathcal{A}^h=(A^h_1, \ldots, A^h_n)$ to obtain an acyclic envy graph $\mathcal{G}(\mathcal{A}^h)$. 
				\STATE Update $\sigma$ to be a topological ordering of $\mathcal{G}(\mathcal{A}^h)$.
			\ENDFOR 			
			\STATE Return $\mathcal{A}^{\ell}$.
		\end{algorithmic}}
	\end{algorithm}
}
\ALG~\ref{alg:EF1_partition} is based on an interesting modification of the round-robin algorithm: initially, $\ALG$~\ref{alg:EF1_partition} selects an arbitrary order (permutation) over the agents $ \sigma:=\left( \sigma(1),\ldots, \sigma(n) \right)$. It then picks an unallocated category $h$ and executes the \textit{Greedy-Round-Robin} algorithm ($\ALG$~\ref{alg:greedy}) with the $n$ agents, $|C_h|$ goods (from category $C_h$), and the selected order $\sigma$. 

$\ALG$~\ref{alg:greedy} follows the ordering $\sigma$ in a round-robin fashion (i.e., it selects agents, one after the other, from $\sigma(1)$ to $\sigma(n)$), and iteratively assigns to the selected agent an unallocated good from $C_h$ that she desires the most. Finally, it returns an allocation $\mathcal{B}^h \in \Pi_n(C_h)$. 

After allocating all the goods of a category $h$, Step $6$ of $\ALG$~\ref{alg:EF1_partition} creates an \textit{envy graph}\footnote{An envy graph, for an allocation $\mathcal{A}$, is a directed graph that captures the envy between agents in $\mathcal{A}$. Specifically, the nodes in the envy graph represent the agents and it contains a directed edge from $i$ to $j$ iff $i$ envies $j$, i.e., iff $v_i(A_i) < v_i(A_j)$.} $\mathcal{G}(\mathcal{A}^h)$. It was established in \cite{lipton-envy-graph} that one can always efficiently update a given partial allocation such that the resulting envy graph is acyclic:

\begin{lemma}\cite{lipton-envy-graph}
	\label{lemma:envylipton}
	Given a partial allocation $(A_1, \ldots, A_n) \in \Pi_n(S)$ of a subset of goods $S$, we can find another partial allocation $\mathcal{B}$=$(B_1, \ldots, B_n) \in \Pi_n(S)$ of $S$ in polynomial time such that \\
(i) The valuations of the agents for their bundles do not decrease: $v_i(B_i) \geq v_i(A_i)$ for all $i \in [n]$.\\
(ii) The envy graph $G(\mathcal{B})$ is acyclic.
\end{lemma}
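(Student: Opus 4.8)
The plan is to prove the lemma constructively via the classical \emph{envy-cycle elimination} procedure, and the main work is to argue that it halts after only polynomially many steps. Throughout, the collection of bundles $\{A_1, \dots, A_n\}$ stays fixed as a partition of $S$; the procedure only reassigns which agent holds which bundle, so every intermediate allocation is a bijection between agents and this fixed collection.

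First I would set up the envy graph $\mathcal{G}(\mathcal{A})$ on vertex set $[n]$ with a directed edge $i \to j$ whenever $v_i(A_i) < v_i(A_j)$, and define the core operation: if $\mathcal{G}(\mathcal{A})$ contains a directed cycle $i_1 \to i_2 \to \dots \to i_t \to i_1$, reassign bundles cyclically so that each $i_s$ receives the bundle currently held by $i_{s+1}$ (indices mod $t$), leaving all off-cycle agents untouched. Because $i_s \to i_{s+1}$ is an envy edge, agent $i_s$ values her new bundle strictly more than her old one, while off-cycle agents keep their bundles; hence no agent's value decreases at any step, which immediately yields property (i) for the final allocation $\mathcal{B}$ obtained once no cycle remains (property (ii) being automatic).

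The crux is polynomial termination, and here I would avoid the naive ``social welfare strictly increases'' argument, since that alone only bounds the number of rotations by the number of assignments ($n!$) or, for integer valuations, pseudo-polynomially. Instead I would use the potential $\Phi(\mathcal{A}) := \sum_{i\in[n]} \bigl|\{\, j : v_i(A_j) > v_i(A_i)\,\}\bigr|$, i.e.\ the number of edges of $\mathcal{G}(\mathcal{A})$, which is a nonnegative integer at most $n(n-1)$. The key observation is that from any agent's viewpoint the multiset of bundles held by \emph{all} agents is invariantly the fixed collection $\{A_1,\dots,A_n\}$; thus an off-cycle agent sees the same set of rival bundles before and after a rotation and her out-degree is unchanged, while each cycle agent $i_s$ moves to a strictly more valued bundle and therefore strictly fewer of the (unchanged) rival bundles exceed her own value, strictly dropping her out-degree. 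Summing, each rotation decreases $\Phi$ by at least the cycle length $t \ge 2$, so at most $n(n-1)/2$ rotations occur; as detecting a cycle and performing one rotation take $O(n^2)$ time, the whole procedure runs in polynomial time.

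The step I expect to be the main obstacle is precisely this monotonicity of $\Phi$: it is tempting but insufficient to track social welfare, and one must instead verify carefully that the set of ``other'' bundles each agent compares against is preserved by a rotation (which hinges on the bundles forming a fixed partition that is merely permuted). Once that invariant is isolated, the strict decrease of every cycle vertex's out-degree---and the unchanged out-degree of every off-cycle vertex---follows cleanly, and both (i) and (ii) drop out.
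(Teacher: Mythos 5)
Your proof is correct and is exactly the standard argument from the cited reference \cite{lipton-envy-graph}, which the paper itself invokes without reproving: rotate bundles along envy cycles, note that cycle agents strictly gain while the fixed collection of bundles is merely permuted, and bound the number of rotations by the strictly decreasing edge count of the envy graph. Your observation that the edge-count potential (rather than social welfare) is what yields the polynomial bound is precisely the key point of the original proof, so there is nothing to add.
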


{
	\begin{algorithm}
	{
		{{\bf Input :} An instance $\left\langle C,[n],(v_i)_{i}\right\rangle$ with additive valuations and an ordering $\sigma$ of $[n]$.\\
		 {\bf Output:} An allocation of the given $|C|$ goods among $n$ agents}
		\caption{Greedy-Round-Robin ($\ALG$~\ref{alg:greedy})}
		\label{alg:greedy}
		\begin{algorithmic}[1]
			\STATE Initialize bundle $B_i \leftarrow \emptyset$, for each agent $i \in [n]$, the set of unallocated goods $M \leftarrow C$, and $t\leftarrow 0$.			
			\WHILE{$M \neq \emptyset$}
				\STATE $t\leftarrow t+1$.
				\FOR{$i =1 \mbox{ to } n$}
			 		\STATE Set $g^t_{\sigma(i)} \in \argmax_{g \in M} v_{\sigma(i)}(g)$ %to agent $\sigma(i)$.
					\STATE Update $B_{\sigma(i)} \leftarrow B_{\sigma(i)} \cup \{ g^t_{\sigma(i)} \}$ 
					\STATE Update $M \leftarrow M \setminus \{g^t_{\sigma(i)} \}$.
					\IF{$M == \emptyset$}	
						\STATE break;
					\ENDIF				
					\ENDFOR
			\ENDWHILE 			
			\STATE Return $\mathcal{B}=(B_1,\ldots,B_n)$.
		\end{algorithmic}}
	\end{algorithm}
}

Finally, $\sigma$ is updated to be a topological ordering of the acyclic directed graph $\mathcal{G}(\mathcal{A}^h)$. This new ordering is then used for the next category of goods. 

The feasibility of the computed allocation, $\mathcal{A}^\ell$, directly follows from the fact that (for each $h\in [\ell]$) $\ALG$~\ref{alg:greedy} distributes the $|C_h|$ goods evenly among $n$ agents. In particular, the round-robin nature of $\ALG$~\ref{alg:greedy} ensures that, $B_i^h$, the set of goods allocated to agent $i$ from category $h$ satisfies:\footnote{Recall that the underlying feasible set of allocations $\mathcal{F}$ is nonempty iff the integer limit $k_h \geq \frac{|C_h|}{n}$.}  
%\begin{align*}
$|B_i^h|  \leq \left\lceil\frac{|C_h|}{n}\right\rceil\leq k_h$.
%\end{align*}

In Lemma~\ref{lemma:acyclic_Algo2} we show that, for each $h \in [\ell]$, the partial allocation obtained after the allocating the first $h$ categories, $\mathcal{A}^h$, is $\EFone$. Since Algorithm~\ref{alg:EF1_partition} allocates all the $m$ goods we get that the final allocation, $\mathcal{A}^\ell$, is in fact $\EFone$ as well. 

This, along with the observation that \ALG~\ref{alg:EF1_partition} runs in polynomial time completes the proof of Theorem~\ref{theorem:EF1}.

Next, we establish a proposition which will be used in the proof of Lemma~\ref{lemma:acyclic_Algo2}.

\begin{proposition}\label{lemma:acyclic_Algo1}
Given any fair division instance $\left\langle C,[n],(v_i)_{i}\right\rangle$ with additive valuations and an ordering of agents $\sigma$, the allocation $\mathcal{B}=(B_1, \ldots, B_n)$ obtained by $\ALG$~\ref{alg:greedy} satisfies the following properties:\\
$(1)$ For any two indices $i<j$, the agent $\sigma(i)$ does not envy agent $\sigma(j)$, i.e., $v_{\sigma(i)} (B_{\sigma(i)} ) \geq v_{\sigma(i)} (B_{\sigma(j)})$.\\
$(2)$ $\mathcal{B}$ is $\EFone$.
%$(3)$ $\mathcal{A}$ satisfies cardinality constraint whenever $k\geq \frac{m}{n}$.
\end{proposition}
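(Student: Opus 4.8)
The plan is to analyze the Greedy-Round-Robin algorithm directly, exploiting the fact that it processes a single category $C$ with additive valuations. Observe that $\ALG$~\ref{alg:greedy} proceeds in rounds $t = 1, 2, \ldots$, and in each round the agents pick a good in the fixed order $\sigma(1), \sigma(2), \ldots, \sigma(n)$, each time greedily selecting the most-valued remaining good. I would first set up notation: let $g^t_{\sigma(i)}$ denote the good picked by agent $\sigma(i)$ in round $t$ (as in the algorithm), and note that within a round, earlier agents pick before later ones.

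For part $(1)$, the key observation is that whenever $i < j$, agent $\sigma(i)$ picks \emph{before} agent $\sigma(j)$ in every round. I would prove $v_{\sigma(i)}(B_{\sigma(i)}) \geq v_{\sigma(i)}(B_{\sigma(j)})$ by a round-by-round (term-by-term) comparison: in each round $t$, agent $\sigma(i)$ selects her good $g^t_{\sigma(i)}$ from a set of remaining goods that still contains the good $g^t_{\sigma(j)}$ which $\sigma(j)$ will pick later in the same round (since $i < j$ means $\sigma(j)$ has not yet chosen). Because $\sigma(i)$ picks greedily the most valuable available good according to $v_{\sigma(i)}$, we have $v_{\sigma(i)}(g^t_{\sigma(i)}) \geq v_{\sigma(i)}(g^t_{\sigma(j)})$ for each round $t$ in which both pick. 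Summing over all rounds and using additivity of $v_{\sigma(i)}$ gives $v_{\sigma(i)}(B_{\sigma(i)}) \geq v_{\sigma(i)}(B_{\sigma(j)})$. One must handle the boundary case where the goods run out mid-round (the \textbf{break} in $\ALG$~\ref{alg:greedy}): if $M$ empties during round $t$, then $\sigma(j)$ may receive no good in that final round while $\sigma(i)$ does, which only helps the inequality, so no envy is introduced.

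For part $(2)$, it remains to handle the reverse direction, i.e., pairs where $j < i$ so that agent $\sigma(i)$ may envy the earlier-picking agent $\sigma(j)$. Here the same round-by-round matching is off by one: in round $t$, agent $\sigma(j)$ picks \emph{before} $\sigma(i)$, so I can only compare $\sigma(i)$'s pick in round $t$ against $\sigma(j)$'s pick in round $t+1$ (a shifted alignment). Concretely, I would show $v_{\sigma(i)}(g^t_{\sigma(i)}) \geq v_{\sigma(i)}(g^{t+1}_{\sigma(j)})$, since at the time $\sigma(i)$ chooses in round $t$, the good $g^{t+1}_{\sigma(j)}$ is still available (it is picked strictly later). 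Summing this shifted inequality telescopes so that $v_{\sigma(i)}(B_{\sigma(i)})$ dominates $v_{\sigma(i)}(B_{\sigma(j)})$ after discarding $\sigma(j)$'s \emph{first-round} good $g^1_{\sigma(j)}$. Thus $v_{\sigma(i)}(B_{\sigma(i)}) \geq v_{\sigma(i)}(B_{\sigma(j)} \setminus \{g^1_{\sigma(j)}\})$, which is exactly the $\EFone$ condition (removing a single good from the envied bundle). Combined with part $(1)$, where no good need be removed, this establishes that $\mathcal{B}$ is $\EFone$ for every ordered pair.

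The main obstacle I anticipate is the careful bookkeeping of the shifted term-by-term comparison in part $(2)$, together with correctly handling the edge cases where the number of goods $|C|$ is not a multiple of $n$, so that the final round is incomplete and different agents end up with bundles of differing cardinality. The argument must ensure the telescoping/index-shift still goes through when $\sigma(j)$ receives a good in round $t+1$ but $\sigma(i)$ does not (or vice versa); the cleanest way is to argue that dropping $\sigma(j)$'s first acquired good and then matching each subsequent good of $\sigma(j)$ to the prior-round good of $\sigma(i)$ respects availability at pick time. Once this alignment is pinned down, additivity makes the summation routine.
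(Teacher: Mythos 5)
Your proposal is correct and follows essentially the same argument as the paper: part $(1)$ via the round-$t$ comparison $v_{\sigma(i)}(g^t_{\sigma(i)}) \geq v_{\sigma(i)}(g^t_{\sigma(j)})$ summed by additivity, and part $(2)$ via the shifted comparison $v_{\sigma(i)}(g^t_{\sigma(i)}) \geq v_{\sigma(i)}(g^{t+1}_{\sigma(j)})$, which after summing removes exactly $\sigma(j)$'s first good $g^1_{\sigma(j)}$ and yields $\EFone$. Your explicit treatment of the incomplete final round (the \textbf{break} when $M$ empties mid-round) is a minor point the paper leaves implicit, but it is handled correctly and changes nothing substantive.
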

\begin{proof}
Write $T:=\lceil\frac{|C|}{n}\rceil$ to denote the total number of rounds of $\ALG$~\ref{alg:greedy}. In each round, for $i<j$, agent $\sigma(i)$ gets to choose her most desired good among the unallocated goods before agent $\sigma(j)$. Hence, if $g^t_{\sigma(i)}$ and $g^t_{\sigma(j)}$  denote the good assigned to agent $\sigma(i)$ and $\sigma(j)$, respectively, in the $t$th round, then $v_{\sigma(i)}(g^t_{\sigma(i)})~\geq~v_{\sigma(i)}(g^t_{\sigma(j)})$ for all $t \in \{1,\ldots,T\}$.
Since the valuations are additive, the stated property holds: $v_{\sigma(i)}(B_{\sigma(i)}) \geq v_{\sigma(i)}( B_{\sigma(j)})$.

It is known that the round-robin algorithm results in an $\EFone$ allocation~\cite{caragiannis2016unreasonable}, we repeat the argument for completeness: If index $i$ is less than $j$, then agent $\sigma(i)$ does not envy agent $\sigma(j)$. On the other hand, even if $i>j$ the good allocated to  agent $\sigma(i)$ in the $t$th round is of value (under $v_{\sigma(i)}$) no less than the good allocated to agent $\sigma(j)$ in the $(t+1)$th round: $
v_{\sigma(i)}(g^t_{\sigma(i)}) \geq v_{\sigma(i)}(g^{t+1}_{\sigma(j)})$ for all $t \in \{1,\ldots,T-1\}$.
Summing we get, $\sum_{t=1}^{T-1} v_{\sigma(i)}(g^t_{\sigma(i)}) \geq v_{\sigma(i)}(B_{\sigma(j)}) -v_{\sigma(i)}(g^1_{\sigma(j)})$. %.\label{eq:lemmaGRR}
%Inequality (\ref{eq:lemmaGRR}) shows that agent $a_i$ envies $a_j$ up to one good.
Thus, the allocation $\mathcal{B}$ is $\EFone$.
\end{proof}

\begin{lemma}\label{lemma:acyclic_Algo2}
$\ALG$~$\ref{alg:EF1_partition}$ returns an $\EFone$ allocation. 
\end{lemma}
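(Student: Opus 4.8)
The plan is to prove by induction on $h$ that each partial allocation $\mathcal{A}^h$ is $\EFone$; setting $h = \ell$ then yields the lemma. The base case $h = 0$ is trivial since $\mathcal{A}^0$ allocates nothing and so is vacuously envy-free. For the inductive step, I would assume that $\mathcal{A}^{h-1}$ is $\EFone$ (and, crucially, that after Step~6 its envy graph is acyclic, so that $\sigma$ is a valid topological ordering of $\mathcal{G}(\mathcal{A}^{h-1})$), and then argue that $\mathcal{A}^h$, obtained by augmenting each bundle $A_i^{h-1}$ with the category-$h$ allocation $B_i^h$ from \textsc{Greedy-Round-Robin}, remains $\EFone$.

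The heart of the argument is to combine two facts additively. Fix any ordered pair of agents, say $\sigma(p)$ and $\sigma(q)$, and consider whether $\sigma(p)$ can be made non-envious of $\sigma(q)$ up to one good. I would split into the two cases dictated by the round-robin order. First, suppose $p < q$, i.e., $\sigma(p)$ precedes $\sigma(q)$ in the ordering used for category $h$. Since $\sigma$ is a topological ordering of the acyclic graph $\mathcal{G}(\mathcal{A}^{h-1})$, the fact that $\sigma(p)$ comes before $\sigma(q)$ means $\sigma(p)$ does not envy $\sigma(q)$ in $\mathcal{A}^{h-1}$, so $v_{\sigma(p)}(A^{h-1}_{\sigma(p)}) \geq v_{\sigma(p)}(A^{h-1}_{\sigma(q)})$. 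Simultaneously, by Proposition~\ref{lemma:acyclic_Algo1}(1) applied to the category-$h$ allocation, $v_{\sigma(p)}(B^h_{\sigma(p)}) \geq v_{\sigma(p)}(B^h_{\sigma(q)})$. Adding these two inequalities and using additivity of $v_{\sigma(p)}$ gives $v_{\sigma(p)}(A^h_{\sigma(p)}) \geq v_{\sigma(p)}(A^h_{\sigma(q)})$, so $\sigma(p)$ does not envy $\sigma(q)$ at all — certainly $\EFone$ toward that pair.

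The second case, $p > q$, is the one I expect to be the main obstacle, because here neither component inequality points the right way on its own. For the previous categories I only know $\mathcal{A}^{h-1}$ is $\EFone$, so there is some good $g \in A^{h-1}_{\sigma(q)}$ with $v_{\sigma(p)}(A^{h-1}_{\sigma(p)}) \geq v_{\sigma(p)}(A^{h-1}_{\sigma(q)} \setminus \{g\})$; and from Proposition~\ref{lemma:acyclic_Algo1}(2) applied within category $h$, there is some good $g' \in B^h_{\sigma(q)}$ with $v_{\sigma(p)}(B^h_{\sigma(p)}) \geq v_{\sigma(p)}(B^h_{\sigma(q)} \setminus \{g'\})$. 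The naive sum would require removing two goods, which is too weak for $\EFone$. The fix is to look more carefully at the round-robin guarantee of Proposition~\ref{lemma:acyclic_Algo1}: its proof actually shows that the single removed good $g'$ can be taken to be $g^1_{\sigma(q)}$, the very first good $\sigma(q)$ received in category $h$, and moreover that $v_{\sigma(p)}(B^h_{\sigma(p)}) \geq v_{\sigma(p)}(B^h_{\sigma(q)}) - v_{\sigma(p)}(g^1_{\sigma(q)})$. I would argue that, because agent $\sigma(p)$ completes every round of \textsc{Greedy-Round-Robin} for category $h$ (so $|B^h_{\sigma(p)}| \geq |B^h_{\sigma(q)}|$), one of the two ``overflow'' goods can be absorbed. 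Concretely, I would pick the single good to remove from $A^h_{\sigma(q)} = A^{h-1}_{\sigma(q)} \cup B^h_{\sigma(q)}$ to be whichever of $g$ or $g^1_{\sigma(q)}$ is appropriate, and use the shifted-by-one-round comparison $v_{\sigma(p)}(g^t_{\sigma(p)}) \geq v_{\sigma(p)}(g^{t+1}_{\sigma(q)})$ to telescope the category-$h$ contribution against $\sigma(p)$'s own bundle, leaving at most the first-round good of $\sigma(q)$ unmatched. Combining this telescoped category-$h$ inequality with the $\EFone$ guarantee from $\mathcal{A}^{h-1}$ — where the removed good is the single good $g$ — shows that after deleting just one good from $\sigma(q)$'s current bundle, $\sigma(p)$ no longer envies $\sigma(q)$. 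The delicate point to get right is ensuring that only one good total needs removing, which relies on the fact that the first-round good for category $h$ is exactly the slack that the round-robin structure lets us discard.

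Having handled both cases for an arbitrary pair, I would conclude that $\mathcal{A}^h$ is $\EFone$. Finally I would note that Step~6 (invoking Lemma~\ref{lemma:envylipton}) only weakly increases each agent's valuation for her own bundle and reassigns bundles, so it preserves the $\EFone$ property while making $\mathcal{G}(\mathcal{A}^h)$ acyclic, which is what feeds the induction hypothesis into the next iteration. Taking $h = \ell$ completes the proof.
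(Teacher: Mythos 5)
Your induction scaffold matches the paper's, but your reading of the topological ordering is inverted, and that single inversion undermines both of your cases. In a topological ordering of the envy graph $\mathcal{G}(\mathcal{A}^{h-1})$, edges point from earlier to later agents; since there is an edge $a \to b$ precisely when $a$ envies $b$, an \emph{earlier} agent may very well envy a \emph{later} one --- indeed the algorithm places enviers earlier on purpose, so that they pick first in the next category. What the ordering rules out is backward envy: if $q < p$, then $\sigma(p)$ does not envy $\sigma(q)$ in $\mathcal{A}^{h-1}$. So in your case $p < q$, the claim that ``$\sigma(p)$ comes before $\sigma(q)$ means $\sigma(p)$ does not envy $\sigma(q)$ in $\mathcal{A}^{h-1}$'' is false. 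What you actually have in that case is only the inductive $\EFone$ bound on the old bundles, together with full non-envy on the category-$h$ bundles from Proposition~\ref{lemma:acyclic_Algo1}(1); these still sum to $\EFone$ with a single removed good from $A^{h-1}_{\sigma(q)}$, so this case is repairable, though your stated conclusion (no envy at all) is too strong.

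The case $p > q$ is where the inversion genuinely costs you. There you believe you hold only two $\EFone$ inequalities (old part and new part) and must ``absorb'' one of the two deficits via the round-robin telescoping. That absorption cannot work: the telescoped bound is exactly $v_{\sigma(p)}(B^h_{\sigma(p)}) \geq v_{\sigma(p)}(B^h_{\sigma(q)}) - v_{\sigma(p)}(g^1_{\sigma(q)})$, which is tight in the worst case and leaves no surplus to cover the old-part gap $v_{\sigma(p)}(g)$; summing two $\EFone$ inequalities genuinely requires removing two goods, and no choice of which single good to delete repairs that. The missing ingredient is precisely the correctly oriented topological fact: since $q < p$, there is no backward edge, so $v_{\sigma(p)}(A^{h-1}_{\sigma(p)}) \geq v_{\sigma(p)}(A^{h-1}_{\sigma(q)})$ with \emph{no} removal, and adding the category-$h$ $\EFone$ bound from Proposition~\ref{lemma:acyclic_Algo1}(2) (one removal) yields $\EFone$. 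This is exactly the structure of the paper's proof (which splits on whether $a$ envies $b$ in the old allocation, a condition the topological order translates into pick order): in every case one of the two summands is fully envy-free and the other is $\EFone$, so only one good is ever removed. Your closing observation that the Lemma~\ref{lemma:envylipton} update preserves $\EFone$ --- it permutes bundles while weakly raising each agent's own value --- is correct and is indeed what feeds the acyclicity back into the induction.
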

\begin{proof}
We will show inductively that, for each $h \in [\ell]$, the partial allocation obtained after  allocating the first $h$ categories, $\mathcal{A}^h$, is $\EFone$. Hence, the returned allocation, $\mathcal{A}^\ell$ is $\EFone$ as well.  

Here, the base case ($h=1$) follows from Proposition~\ref{lemma:acyclic_Algo1}; since $\mathcal{A}^1 = \mathcal{B}^1$ and the proposition ensures that $\mathcal{B}^1$ is $\EFone$. 

By the induction hypothesis we have that $\mathcal{A}^h$ is $\EFone$ and, by construction, the corresponding envy graph $\mathcal{G}(\mathcal{A}^h)$ is acyclic. Next, we will show that this continues to hold for the next category $h+1$. 
Note that the ordering of agents for (executing \ALG~\ref{alg:greedy} over) the category $C_{h+1}$ is obtained by topologically sorting $\mathcal{G}(\mathcal{A}^h)$. Let $\pi$ be that topological ordering and write $\pi^{-1}(a)$ as the index of agent $a$ according to the ordering $\pi$. 

Now, if agent $a$ envies $b$ in $\mathcal{A}^h$, then there is a directed edge from $a$ to $b$ in $\mathcal{G}(\mathcal{A}^h)$ and, hence, $\pi^{-1}(a)<\pi^{-1}(b)$. In this case, Proposition~\ref{lemma:acyclic_Algo1} ensures that $a$ does not envy $b$ in $\mathcal{B}^{h+1}$, i.e., we have $v_a(B_a^{h+1}) \geq v_a(B_b^{h+1})$. The fact that $\mathcal{A}^h$ is $\EFone$ gives us $v_a(A_a^h) \geq v_a(A_b^h) - v_a(g)$, for some $g \in A_b^h$. Summing the last two inequalities and noting that $A^{h+1}_a = A^{h}_a \cup B_a^{h+1}$ and $A^{h+1}_b = A^{h}_b \cup B_b^{h+1}$ we get that $\mathcal{A}^{h+1}$ is $\EFone$ with respect to agent $a$ and $b$. 

The complementary case wherein $a$ does not envy $b$ in $\mathcal{A}^h$ is analogous, since $\mathcal{B}^{h+1}$ is guaranteed to be $\EFone$ in itself. Overall, this establishes the stated claim, that the final allocation $\mathcal{A}^\ell$ is $\EFone$.
\end{proof}

Lemma~\ref{lemma:acyclic_Algo2} along with the feasibility argument mentioned above (and the direct observation that \ALG~\ref{alg:EF1_partition} runs in polynomial time) proves Theorem~\ref{theorem:EF1}, which we restate below.

\EFONE*

\section{Non-Existence of $\bEFX$ and $\bEFL$ Allocations Under Cardinality Constraints}
\label{sec:EFL}

This section considers comparative notions of fairness which are stronger than $\EFone$. In particular, we show that allocations which satisfy \emph{envy-freeness up to the least valued good} ($\EFX$) are not guaranteed to exist under cardinality constraints.\footnote{It is relevant to note that the existence of $\EFX$ allocations in unconstrained, fair division instances (with additive valuations) remains an interesting, open question~\cite{caragiannis2016unreasonable}.} The non-existence result also holds for \textit{envy-free up to one less-preferred good} ($\EFL$) allocations; defined in~\cite{barman2018groupwise}, this solution concept is weaker than $\EFX$, but stronger than $\EF1$. Formally, 

\begin{itemize}
\item  Envy-free up to the least valued good ($\EFX$)~\cite{caragiannis2016unreasonable}: An allocation $\mathcal{A}=(A_1,A_2,\dots,A_n)$ is said to be $\EFX$ iff for every pair of agents $i ,j \in [n]$ and for all goods $g \in A_j \cap \{ g' \in [m] \mid v_i(g') >0 \}$ (i.e., for all goods $g$ in $A_j$ which are positively valued by agent $i$) we have $v_i(A_i)\geq v_i(A_j\setminus\{g\})$.
\item Envy-free up to one less-preferred good ($\EFL$)~\cite{barman2018groupwise}: An allocation $\mathcal{A}=(A_1,A_2,\dots,A_n)$ is said to be $\EFL$ iff for every pair of agents $i , j  \in [n]$ at least one of the following conditions hold:
\begin{itemize}
\item $A_j$ contains at most one good which is positively valued by $i$, i.e., $|A_j \cap \{ g' \mid v_i(g') > 0 \} | \leq 1 $  
\item There exists a good $g \in A_j$ such that $v_i(A_i) \geq v_i(A_j\setminus\{g\})$ and $v_i(A_i) \geq v_i(\{g\})$.
\end{itemize}
\end{itemize}

Note that, by definition, an $\EFX$ allocation is necessarily $\EFL$ and, similarly, an $\EFL$ allocation is always $\EFone$. 

The universal existence of $\EFL$ allocations in unconstrained fair division instances (with additive valuations) was established in~\cite{barman2018groupwise}. By contrast, the following example shows that, under cardinality constrains, $\EFL$ allocations do not always exist. This also implies the non-existence of $\EFX$ under cardinality constraints.

\begin{example}
\label{example:noEFL}
Consider an instance with $2$ agents and $4$ goods. Here, the bundle of each agent is constrained to contain at most $2$ goods. That is, we have a (uniform matroid) single category $C_1$, which contains all the goods, and the threshold $k_1 = 2$. 

The valuations of both the agents are identical and as follows: $v(g_1)=50\ $ and $\ v(g_2)=v(g_3)=v(g_4)=1$. In this example, no feasible allocation will ensure $\EFL$.
\end{example}

In this example any feasible allocation will allocate exactly two goods to each agent. Therefore, one of the agents, say $a_1$, will receive a bundle of value $51$ (containing the high-valued good $g_1$ and a low-valued good), while the other agent, $a_2$, will receive a bundle of value $2$ (containing two low-valued goods). Since agent $a_2$ envies $a_1$, $\EFL$ can be ensured only if there exists a good of value at most $2$ in $a_1$'s bundle which, when removed, eliminates $a_2$'s envy. However, such a good does not exist in $a_1$'s bundle and, hence, no feasible allocation is $\EFL$.

\section{$\bMMS$ Under Cardinality Constraints: Proof of Theorem~\ref{theorem:MMS}}\label{sec:MMS}
To obtain an $\alpha$-approximate maximin share allocation ($\alpha$-$\MMS$) under cardinality constraints we define a nonnegative, monotone, submodular function $F_i(\cdot)$, for each agent $i\in [n]$, which satisfies  
\begin{align} 
\underset{(P_1, \ldots, P_n) \in \mathcal{F}}{\max} \  \ \underset{j\in [n]}{\min}\ v_i(P_j)
 & = \underset{ (P_1, \ldots, P_n) \in\Pi_n([m])}{\max} \  \ \underset{j\in [n]}{\min}\ F_i(P_j).\nonumber
\end{align}
Hence, a fair division problem under additive valuations and cardinality constraints reduces to an unconstrained fair division problem under monotone, submodular valuations $(F_i)_i$. Recall that a function $F_i(\cdot)$ is said to be monotone and submodular iff for all subsets $A\subseteq B\subseteq[m]$ and $g\in[m]\setminus B$, we have $F(A)\leq F(B)$ and $F(A\cup\{g\})-F(A)\geq F(B\cup\{g\})-F(B)$.

We define the function $F_i(\cdot):2^{[m]}\rightarrow\mathbb{R}$ as follows $F_i(S) := \displaystyle \sum_{h\in[\ell]} f_i^h(S)$
where, 
\begin{align*}
f_i^h(S) & :=
  \begin{cases}
    \displaystyle \sum_{g\in S \cap C_h} v_i(g), & \text{if } |S\cap C_h|\leq k_h\\
    \displaystyle \sum_{g \in \mathtt{Top}_i^{k_h}(S \cap C_h)} v_i(g), & \text{otherwise}
  \end{cases}
  \end{align*}
  
Here, $\mathtt{Top}_i^k(T)$ denotes the set of the $k$ most valued (by agent $i$) goods contained in $T$. The following lemma asserts that $F_i$s are submodular. 
  
\begin{lemma}\label{lemma:mms_submodular}
For each agent $i\in[n]$ the function $F_i(\cdot{})$ (defined above) is monotone, nonnegative, and submodular.
\end{lemma}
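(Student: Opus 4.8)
The plan is to prove the three properties separately, treating $F_i$ as a sum over categories $F_i = \sum_{h\in[\ell]} f_i^h$ and observing that nonnegativity, monotonicity, and submodularity are all preserved under (finite) sums. Moreover, since each $f_i^h(S)$ depends only on $S\cap C_h$ and the categories partition $[m]$, it suffices to fix a single category $h$ and establish the three properties for $f_i^h$ alone; the general case follows immediately by summation. This reduction is the first thing I would state, as it removes the need to reason about interactions across categories.

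Nonnegativity is immediate since $v_i(g)\ge 0$ for all goods and $f_i^h$ is always a sum of such values. For monotonicity, I would fix $A\subseteq B$ and argue that $f_i^h(A)\le f_i^h(B)$ by cases on $|B\cap C_h|$ versus $k_h$. The cleanest way is to use the unified description $f_i^h(S)=\sum_{g\in\mathtt{Top}_i^{k_h}(S\cap C_h)} v_i(g)$, which is valid in both branches once one notes that when $|S\cap C_h|\le k_h$ the top-$k_h$ set is simply all of $S\cap C_h$. Then $f_i^h(S)$ is the sum of the (at most) $k_h$ largest values available in $S\cap C_h$, and enlarging $S$ can only enlarge the pool of available goods, so the best $k_h$ values can only increase. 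This monotone-pool observation is the key structural fact.

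For submodularity I would verify the marginal-decreasing condition: for $A\subseteq B\subseteq[m]$ and $g\in[m]\setminus B$, show $f_i^h(A\cup\{g\})-f_i^h(A)\ge f_i^h(B\cup\{g\})-f_i^h(B)$. If $g\notin C_h$ both marginals are zero, so assume $g\in C_h$. The marginal gain $f_i^h(S\cup\{g\})-f_i^h(S)$ equals $v_i(g)$ minus the value of the good that $g$ displaces: if $|S\cap C_h|<k_h$ the marginal is exactly $v_i(g)$; if $|S\cap C_h|\ge k_h$, then adding $g$ either changes nothing (if $v_i(g)$ is below the current $k_h$-th largest value) or replaces the current smallest of the top $k_h$ by $g$, giving marginal $\max\{0,\,v_i(g)-\theta_i(S)\}$, where $\theta_i(S)$ is the $k_h$-th largest value in $S\cap C_h$ (with the convention $\theta_i(S)=0$ when fewer than $k_h$ goods are present). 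I would then show this marginal is nonincreasing in $S$ by proving that the threshold $\theta_i(S)$ is monotone nondecreasing as $S$ grows, which is again the monotone-pool fact: the $k_h$-th largest value among a larger set of goods can only be at least as big.

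The main obstacle will be handling the boundary transition where $|S\cap C_h|$ crosses $k_h$, since that is where the two branches of the definition meet; I expect the threshold reformulation $f_i^h(S)=\sum_{g\in\mathtt{Top}_i^{k_h}(S\cap C_h)}v_i(g)$ together with $\theta_i(S)$ being nondecreasing in $S$ to make this transition seamless and to handle both the monotonicity and submodularity arguments uniformly, so I would invest the most care in stating and justifying that threshold monotonicity cleanly.
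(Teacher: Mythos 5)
Your proof is correct and takes essentially the same approach as the paper: both reduce to a single category via the unified description $f_i^h(S)=\sum_{g\in\mathtt{Top}_i^{k_h}(S\cap C_h)}v_i(g)$ and verify the diminishing-marginals inequality, with your formula $f_i^h(S\cup\{g\})-f_i^h(S)=\max\{0,\,v_i(g)-\theta_i(S)\}$ simply packaging the paper's explicit case analysis (marginal $v_i(g)$ when $|S\cap C_h|<k_h$; marginal $0$ or $v_i(g)-v_i(g')$ otherwise) into one expression. As a small bonus, your key fact that $\theta_i(\cdot)$ is nondecreasing under set inclusion is stated in the correct direction, whereas the paper's corresponding line asserts $\min_{\hat g\in\mathtt{Top}_i^{k_h}(T\cap C_h)}v_i(\hat g)\leq\min_{\hat g\in\mathtt{Top}_i^{k_h}(S\cap C_h)}v_i(\hat g)$ for $S\subseteq T$---an evident typo, since the submodularity inequality requires (and the larger-pool argument actually gives) the reverse.
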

\begin{proof}
Since, for each agent $i \in [n]$, the valuation $v_i(\cdot)$ is nonnegative, the function $F_i(\cdot)$ is nonnegative as well. Furthermore, for any agent $i \in [n]$ and category $h \in [\ell]$, the function $f_i^h$ is monotone; it either increases or remains constant after a good $g \in [m]$ is included in any bundle $S$. Hence, $F_i(\cdot)$ is monotone. 

Next, we will show that for any agent $i \in [n]$ the function $F_i$ is submodular, i.e., it satisfies the following inequality, for any two subsets $S\subseteq T \subset [m]$ and good $g \notin T$.
\begin{align}
F_i(S\cup\{g\})-F_i(S)\geq F_i(T\cup\{g\})-F_i(T).\label{eq:submodularity}
\end{align}

%Monotonicity of $F_i$ ensures that both the left-hand-side and the right-hand-side of inequality (\ref{eq:submodularity}) are nonnegative. Hence, the inequality directly holds if $F_i(T\cup\{g\})-F_i(T)  = 0$.

Write $h$ to denote the category of the good $g$, i.e., $g \in C_h \setminus T$. Since including $g$ in the subsets only effects the value of $f_i^h$, we have  $F_i(S\cup\{g\})-F_i(S) = f_i^h(S \cup\{g\} ) - f_i^h(S)$ and 
$F_i(T\cup\{g\})-F_i(T) = f_i^h(T \cup\{g\} ) - f_i^h(T)$. 

Note that if $|T\cap C_h|< k_h$, then we have $f_i^h(S \cup\{g\} ) - f_i^h(S) = v_i(g)$ and $f_i^h(T \cup\{g\} ) - f_i^h(T) = v_i(g)$. Therefore, if the cardinality of $T \cap C_h$ is strictly less that $k_h$, then the inequality (\ref{eq:submodularity}) is in fact tight. 
Otherwise, if $|T\cap C_h|\geq k_h$, then we have the following two cases
\begin{enumerate} 
\item[] Case 1: $g \notin \mathtt{Top}_i^{k_h}\left( (T \cup \{g \}) \cap C_h \right)$. Here, we have $f_i^h(T\cup\{g\})=f_i^h(T)$. Furthermore, the monotonicity of function $f_i^h(\cdot)$ ensures that $f_i^h(S \cup\{g\} ) -  f_i^h(S)\geq 0 $. Therefore, inequality~(\ref{eq:submodularity}) holds.  
\item[] Case 2: $g \in \mathtt{Top}_i^{k_h} \left( (T \cup \{g \}) \cap C_h \right)$. In this case, $f_i^h(T \cup\{g\} ) - f_i^h(T)= v_i(g)-v_i(g')$, where $g' \in \underset{\hat{g}\in \mathtt{Top}_i^{k_h}(T\cap C_h)}{\argmin}\ v_i(\hat{g})$.
\begin{itemize}
\item Now, if $|S\cap C_h|<k_h$, then $f_i^h(S \cup\{g\} ) - f_i^h(S)= v_i(g)\geq v_i(g)-v_i(g')$ and, hence, inequality (\ref{eq:submodularity}) holds. 
\item Otherwise, if $|S\cap C_h|\geq k_h$, we have $f_i^h(S \cup\{g\} ) - f_i^h(S)=v_i(g)-v_i(g'')$, where $g'' \in \underset{\hat{g}\in \mathtt{Top}_i^{k_h}(S\cap C_h)}{\argmin}\ v_i(\hat{g})$. Since $S\subseteq T$, we have $\underset{\hat{g}\in \mathtt{Top}_i^{k_h}(T\cap C_h)}{\min}\ v_i(\hat{g})\leq \underset{\hat{g}\in \mathtt{Top}_i^{k_h}(S\cap C_h)}{\min}\ v_i(\hat{g})$. That is, $v_i(g')\leq v_i(g'')$ and, again, inequality (\ref{eq:submodularity}) is satisfied. 
\end{itemize}
\end{enumerate}
\end{proof}

Recall that the set of feasible allocations $\mathcal{F}$ is nonempty if and only if the cardinality thresholds $k_h\geq \frac{|C_h|}{n}$ for all categories $h\in[\ell]$. Assuming nonempty set of feasible allocations, we now establish that the maximin value over all feasible allocations using valuations $v_i(\cdot)$ is equivalent to the maximin value over all possible $[n]$-partitions of $[m]$ using the function $F_i(\cdot)$.

\begin{lemma}\label{lemma:mms_equality}
If for all the categories $h \in [\ell]$, the cardinality threshold satisfies $k_h \geq \frac{|C_h|}{n}$ (i.e., if the set of feasible allocations $\mathcal{F}$ is nonempty), then the following equality holds for each agent $i \in [n]$:
\begin{align*}
\underset{(P_1, \ldots, P_n) \in \mathcal{F}}{\max}\ \ \ \underset{j\in [n]}{\min}\ v_i(P_j)
 & = \underset{ (P_1, \ldots, P_n) \in\Pi_n([m])}{\max}\ \ \ \underset{j\in [n]}{\min} \ F_i(P_j)
\end{align*}
%\begin{eqnarray}
%\underset{\mathcal{P}\in \mathcal{F}}{max}\ \underset{j\in [n]}{min}\ v_i(P_j)
% = \underset{\mathcal{P}\in\ \Pi_n([m])}{max}\ \underset{j\in [n]}{min}F_i(P_j).\nonumber 
%\end{eqnarray}
\end{lemma}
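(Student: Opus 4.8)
The plan is to prove the equality by establishing the two inequalities separately, with the left-hand side (the constrained maximin value under $v_i$) denoted LHS and the right-hand side (the unconstrained maximin value under $F_i$) denoted RHS.

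For the direction $\text{LHS} \leq \text{RHS}$, I would take any feasible partition $(P_1, \ldots, P_n) \in \mathcal{F}$ attaining the left-hand maximum. Feasibility means $|P_j \cap C_h| \leq k_h$ for every agent $j$ and category $h$, so the capping in the definition of $f_i^h$ never triggers and $F_i(P_j) = v_i(P_j)$ for all $j$. Since $\mathcal{F} \subseteq \Pi_n([m])$, this same partition is admissible for the right-hand optimization and witnesses $\min_j F_i(P_j) = \min_j v_i(P_j)$, yielding $\text{RHS} \geq \text{LHS}$.

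For the reverse direction $\text{LHS} \geq \text{RHS}$, I would start from an unconstrained partition $(P_1, \ldots, P_n) \in \Pi_n([m])$ attaining the right-hand maximum and transform it into a feasible partition $(Q_1, \ldots, Q_n) \in \mathcal{F}$ satisfying $v_i(Q_j) \geq F_i(P_j)$ for every $j$. The transformation proceeds category by category; since the categories are disjoint, they can be handled independently. Within a category $C_h$, each bundle with $|P_j \cap C_h| > k_h$ carries $|P_j \cap C_h| - k_h$ \emph{excess} goods, namely those lying outside $\mathtt{Top}_i^{k_h}(P_j \cap C_h)$. I would detach exactly these excess goods and reassign them to bundles having spare capacity $k_h - |P_{j'} \cap C_h| > 0$. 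The feasibility of this reassignment is precisely where the hypothesis $k_h \geq |C_h|/n$ enters: summing over agents, total spare capacity minus total excess equals $\sum_j (k_h - |P_j \cap C_h|) = n k_h - |C_h| \geq 0$, so all excess goods can be absorbed while keeping every bundle within its threshold. After reassignment, each agent retains its top $k_h$ goods from $C_h$, worth exactly $f_i^h(P_j)$, and possibly acquires further goods of nonnegative value, so $v_i(Q_j \cap C_h) \geq f_i^h(P_j)$. Summing over all categories gives $v_i(Q_j) \geq F_i(P_j)$, hence $\min_j v_i(Q_j) \geq \min_j F_i(P_j) = \text{RHS}$. As $(Q_1, \ldots, Q_n) \in \mathcal{F}$, the left-hand maximum is at least this quantity, establishing $\text{LHS} \geq \text{RHS}$.

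The main obstacle is the redistribution argument in the second direction, where two things must be verified simultaneously: first, that all excess goods can be placed without violating any cardinality constraint, which is the capacity-counting step invoking $k_h \geq |C_h|/n$; and second, that detaching only the non-top goods preserves the per-category contribution $f_i^h$, so that no agent's value under $v_i$ drops below $F_i(P_j)$. Everything else reduces to the observation that the two value functions coincide on feasible bundles.
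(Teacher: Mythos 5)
Your proof is correct and takes essentially the same route as the paper's: the easy direction uses that $F_i$ and $v_i$ coincide on feasible bundles, and the hard direction moves the excess (non-top) goods out of over-capacity bundles into bundles with spare room, which preserves each bundle's $F_i$-value and is possible precisely because $n k_h \geq |C_h|$. The only difference is presentational: the paper performs this redistribution as an iterated sequence of single-good transfers (with an averaging argument supplying an under-capacity bundle at each step), whereas you do it in one batch per category via the aggregate capacity count $\sum_j (k_h - |P_j \cap C_h|) = n k_h - |C_h| \geq 0$ --- the underlying idea is identical.
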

\begin{proof}
We fix an agent $i \in [n]$ and first show that the left-hand side of the stated equality is upper bounded by the right-hand side. 
Write $(A^*_1,\ldots,A^*_n) \in \argmax_{(P_1, \ldots, P_n) \in \mathcal{F}} \ \  \min_{j \in [n]} v_i(P_j)$. Since $|A^*_j\cap C_h|\leq k_h $ for all $h $ and $j$, we have $F_i(A^*_j) =v_i(A^*_j)$ for all $j$. Now, the inequality $\min_{j\in[n]} F_i(A^*_j) \leq \max_{(P_1, \ldots, P_n) \in \Pi_n([m])} \ \min_{j \in [n]} \  F_i(P_j)$ establishes the upper bound. 

We complete the proof by showing that an inequality holds in the other direction as well. 
%Proof of \textbf{LHS$\geq$RHS}:\\
Write  $(B^*_1,\ldots,B^*_n) \in \argmax_{(P_1, \ldots, P_n) \in \Pi_n([m])} \ \min_{j \in [n]} \  F_i(P_j)$.  Say $(B^*_1,\ldots,B^*_n)\notin\mathcal{F}$, then there exists an index $b \in [n]$ and $h \in [\ell]$ such that $|B^*_b\cap C_h|>k_h$. Since $k_h\geq \frac{|C_h|}{n}$, an averaging argument implies that there exists another index $a\in[n]$ for which $|B^*_a\cap C_h|<k_h$. 
Now, consider the lowest valued (by agent $i$) good $g \in B^*_b \cap C_h$. Note that $F_i(B^*_b)= F_i(B^*_b\setminus\{g\})$ and $F_i(B^*_a\cup\{g\})\geq F_i(B^*_a)$. Hence, we can iteratively perform such swaps till all the cardinality constraints are satisfied. That is, we can obtain an allocation, say $\mathcal{B}'=(B'_1,\ldots, B'_n) \in \mathcal{F}$, which is feasible and satisfies $F_i (B'_j) \geq F_i (B^*_j) $ for all $j \in [n]$. The feasibility of $\mathcal{B}'$ ensures that the following equality holds for all $j$: $F_i(B'_j)=v_i(B'_j)$.  Therefore, $\min_{j\in[n]} F_i(B'_j) = \min_{j\in[n]} v_i(B'_j) \leq \max_{(P_1, \ldots, P_n) \in \mathcal{F}} \ \  \min_{j \in [n]} v_i(P_j)$. Hence, the left-hand side of the equality stated in the lemma is at least as much as the right-hand side. 
\end{proof}

Under submodular valuations a $1/3$-$\MMS$ allocation can be computed efficiently in the unconstrained setting~\cite{ghodsi2017fair}. Therefore, for an unconstrained fair division instance over the $m$ goods and with submodular valuations of the $n$ agents as $(F_i)_i$, we can, in polynomial time, find an allocation $\mathcal{A}=(A_1,\ldots,A_n) \in \Pi_n([m])$ which is $1/3$-$\MMS$. Employing a swap argument--- similar to the one used in the  proof of Lemma~\ref{lemma:mms_equality}---we can efficiently convert $\mathcal{A}$ into a feasible allocation $\mathcal{B} \in \mathcal{F}$ which satisfies $F_i(B_i) \geq F_i(A_i)$, for all $i\in[n]$. That is, for the unconstrained instance, $\mathcal{B}$ is a $1/3$-$\MMS$ allocation as well. In addition, the feasibility of $\mathcal{B}$ implies that $F_i(B_i) = v_i(B_i)$, for all $i$. Overall, via Lemma~\ref{lemma:mms_equality} (i.e., the fact that the maximin shares of each agent $i$ in the constructed unconstrained instance is equal to the underlying $\CMMS_i$ value), we get that $\mathcal{B}$ is a feasible, $1/3$-$\MMS$ allocation for the constrained instance. This completes the proof of Theorem~\ref{theorem:MMS}, which is restated below. 

\APXMMS*

\section{Identical Valuations and Matroid Constraint}\label{sec:identical}
This section shows that if the additive valuations of the agents are identical (i.e., $v_i = v_j$ for all $i,j \in [n]$), then an $\EFone$ allocation is guaranteed to exist even under a laminar matroid constraint. 

Matroids have been studied extensively in mathematics and computer science; see, e.g.~\cite{oxley1992matroid}. These structures provide an encompassing framework for representing combinatorial constraints; in particular, the cardinality constraints considered in the previous sections correspond to a particular matroid, called the partition matroid. Formally, a \emph{matroid} is defined as a pair $ ([m], \mathcal{I})$ where $[m]$ is the ground set of elements and $\mathcal{I}$---referred to as \emph{independent sets}---is a nonempty collection of subsets of $[m]$  that satisfies: (i) Hereditary property: If $B \in \mathcal{I}$ and $A \subset B$, then $A \in \mathcal{I}$, and (ii) Independent Set Exchange: If $A,B \in \mathcal{I}$ and $|A| < |B|$, then there exist an element $ x \in B \setminus A$ such that $A \cup \{x\} \in \mathcal{I}$.  

We consider a fair division instance $\langle[m],[n], (v_i)_{i\in[n]},\mathcal{M}\rangle$ where $\mathcal{M}$ denotes the set of all allocations which satisfy the underlying matroid constraint, i.e., allocations whose constituent bundles are independent, $\mathcal{M} :=\{\mathcal{A} = (A_1, \ldots, A_n) \in\Pi_n([m]) \mid \ A_i \in \mathcal{I}$ for all $i\in[n]\}$. One such construct is called \textit{laminar} matroids which generalizes the cardinality constrained setting.  A \textit{Laminar matroid} is a 2-tuple $([m], \mathcal{I})$, where the goods are categorized as $\ell$ sets $C_1,C_2,\ldots , C_{\ell}$ such that $\displaystyle \cup_{h=1}^{\ell} C_h = [m]$.   Additionally, for every pair $x,y\in\{1,\ldots,\ell\}$, only one of the following holds true: (i)~$C_x\subset C_y$ or, (ii)~$C_y\subset C_x$ or, (iii)~ $C_x\cap C_y=\emptyset$. Each category $h$ has an upper threshold $k_h$. A set $I \subseteq [m]$ is \textit{independent}
if and only if $|I \cap C_h| \leq k_h$ for each $h\in\{1,\ldots,\ell\}$.

 The main result of this section is as follows 
 % In addition, here we assume that $v_i = v_j$ for all $i,j \in [n]$. 

\begin{theorem}\label{theorem:identical}
Every fair division instance $\left\langle[m],[n],(v_i)_{i\in[n]}, \mathcal{M}\right\rangle$ under additive, identical valuations and laminar matroid constraint, $\mathcal{M} \neq \emptyset$ , admits an $\EFone$ allocation.
\end{theorem}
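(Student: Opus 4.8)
The plan is to exploit the fact that with identical valuations $\EFone$ has a clean combinatorial characterization, and then to select a feasible allocation optimizing a suitable potential and argue it must be $\EFone$ via matroid exchange. Write $v := v_1 = \cdots = v_n$ for the common valuation. For identical valuations, removing the most valuable good from a bundle is the best possible removal, so an allocation $(A_1,\dots,A_n) \in \mathcal{M}$ is $\EFone$ if and only if $v(A_i) \ge v(A_j) - \max_{g \in A_j} v(g)$ for every ordered pair $i,j$. Moreover it suffices to check this for the poorest bundle $A_p \in \argmin_k v(A_k)$: if $A_p$ is not ``strongly envious'' of anyone, then neither is any $A_i$, since $v(A_i) \ge v(A_p)$. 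I will take as my candidate the allocation $\mathcal{A}^\star = (A_1,\dots,A_n)$ that minimizes the potential $\Phi(\mathcal{A}) := \sum_{k \in [n]} v(A_k)^2$ over the (finite, nonempty) set $\mathcal{M}$; intuitively this is the most value-balanced feasible partition into independent sets, and a minimizer exists by finiteness.

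The heart of the proof is to show $\mathcal{A}^\star$ is $\EFone$ by contradiction. Suppose some poorest bundle $A_i$ strongly envies $A_j$, i.e.\ $v(A_i) < v(A_j) - v(g^\star)$ where $g^\star \in \argmax_{g \in A_j} v(g)$ (note this forces $|A_j| \ge 2$). The key quantitative observation is that moving \emph{any} single good $g \in A_j$ into $A_i$ strictly decreases $\Phi$: writing $a = v(A_i)$, $b = v(A_j)$, $x = v(g)$, the change is $2x\bigl(x - (b-a)\bigr)$, which is negative because $x \le v(g^\star) < b - a$. Hence it remains only to produce a \emph{feasibility-preserving} transfer of value from $A_j$ to $A_i$. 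The easy case is when some $g \in A_j$ satisfies $A_i \cup \{g\} \in \mathcal{I}$: a one-way transfer then keeps both bundles independent (by the hereditary property) and strictly lowers $\Phi$, contradicting optimality of $\mathcal{A}^\star$.

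The main obstacle is the complementary case in which \emph{no} good of $A_j$ can be appended to $A_i$, i.e.\ $A_i \cup \{g\} \notin \mathcal{I}$ for all $g \in A_j$. This exactly says $A_j$ lies in the matroid closure of $A_i$, whence $|A_j| \le |A_i|$, so a one-way transfer is impossible and a genuine \emph{swap} is required. I would use the fundamental circuit: since $A_i \cup \{g^\star\}$ is dependent, it contains a unique circuit $C$ with $g^\star \in C$, and for any $g' \in C \cap A_i$ the set $A_i - g' + g^\star$ is independent. The delicate part is to choose $g'$ so that simultaneously (i) the receiving bundle stays feasible, i.e.\ $A_j - g^\star + g' \in \mathcal{I}$ (a symmetric exchange between the two disjoint independent sets, which I expect to establish via the standard basis/independent-set symmetric exchange property, or, if a single swap does not suffice, via an augmenting sequence of swaps along the matroid exchange graph), and (ii) $v(g') \le v(g^\star)$, so that value genuinely flows from the rich bundle to the poor one and $\Phi$ strictly decreases. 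Verifying that the matroid structure always permits a swap meeting both requirements is where the real work lies; once it is done, $\Phi(\mathcal{A}^\star)$ can be strictly reduced, contradicting minimality and forcing $\mathcal{A}^\star$ to be $\EFone$.
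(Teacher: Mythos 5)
Your skeleton is sound and closely parallels the paper's argument: the paper takes an allocation maximizing Nash social welfare over $\mathcal{M}$ rather than minimizing $\sum_k v(A_k)^2$, but this difference is cosmetic, since both reductions come down to the same local move, namely strictly increasing $\min\{v(A_i),v(A_j)\}$ while conserving $v(A_i)+v(A_j)$ and keeping all bundles in $\mathcal{I}$; your easy case (one-way transfer when some $g\in A_j$ has $A_i\cup\{g\}\in\mathcal{I}$) is identical to the paper's, as is the deduction $|A_j|\le|A_i|$ in the complementary case. The genuine gap is exactly where you say ``the real work lies'': your plan fixes $g^\star$, the maximum-valued good of $A_j$, and seeks an exchange partner $g'$ in the fundamental circuit of $g^\star$ in $A_i\cup\{g^\star\}$ with $v(g')\le v(g^\star)$. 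No such partner need exist. Concretely, take a partition matroid with category $X=\{b_1,a_0\}$ of capacity $1$ and category $Y=\{b_2,b_3,b_4,a_1,a_2,a_3\}$ of capacity $3$, with values $v(b_z)=10$, $v(a_0)=25$, $v(a_z)=1$, and bundles $A_i=\{a_0,a_1,a_2,a_3\}$ (value $28$), $A_j=\{b_1,b_2,b_3,b_4\}$ (value $40$). Strong envy holds ($28<40-10$), no good of $A_j$ can be appended to $A_i$, and the fundamental circuit of $g^\star=b_1$ is $\{a_0,b_1\}$, so the \emph{only} feasible partner of $g^\star$ is $a_0$ with $v(a_0)=25>10$: your requirements (i) and (ii) cannot be met simultaneously for $g^\star$, and swapping it would move value the wrong way. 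The lesson is that one must not privilege $g^\star$; here the swap $b_2\leftrightarrow a_1$ (transferring $\delta=9\in(0,\,b-a)$) does strictly decrease $\Phi$.

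The paper closes precisely this hole with a structural lemma you would need to prove: extend $A_j$ to an independent set of size $|A_i|$, pass to the truncated matroid in which both sets are bases, and repeatedly apply strong basis exchange to build an injection $\mu:A_j\to A_i$ such that for \emph{every} $g\in A_j$ both $(A_i\setminus\{\mu(g)\})\cup\{g\}$ and $(A_j\setminus\{g\})\cup\{\mu(g)\}$ are independent. A counting step then replaces your circuit argument: since $v(A_j)>v(A_i)\ge v(\mu(A_j))$, some pair has $v(g)>v(\mu(g))$, and the strong-envy condition gives $v(g)<v(A_j)-v(A_i)$ for every $g\in A_j$, so the transferred value $\delta=v(g)-v(\mu(g))$ lies strictly in $(0,\,b-a)$ and your quantity $2\delta\bigl(\delta-(b-a)\bigr)$ is indeed negative. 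Your hedge about an ``augmenting sequence of swaps'' gestures in this direction, but the bijection-plus-averaging lemma is the nontrivial content of the proof, not a routine verification. One further minor point: if some $v(g)=0$ your decreases are non-strict even in the easy case ($x=0$ gives zero change in $\Phi$); the paper sidesteps this by assuming $v(g)>0$ for all goods and treating the degenerate case in a footnote, and your argument needs an analogous normalization or tie-breaking rule.
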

\begin{proof}
We establish the result by showing that an allocation which maximizes the \textit{Nash Social Welfare} over the set $\mathcal{M}$ is necessarily $\EFone$. For  an allocation $\mathcal{A}=(A_1, \ldots, A_n)$, the {Nash Social Welfare} ($\NSW$) is defined as the geometric mean of the agents' valuations, $\NSW(\mathcal{A}):=(\prod_{i\in[n]}\ v_i(A_i))^{1/n}$. 

Let $v(\cdot)$ denote identical, additive valuations of the agents. We assume $v(g)>0$ for all $g\in[m]$ and consider an optimal allocation $\mathcal{A}\in \underset{\mathcal{B}\in[m]}{\argmax}\ \NSW(\mathcal{B})$, which satisfies $\NSW(\mathcal{A})>0$.\footnote{If the optimal $\NSW$ over $[m]$ is zero, then it must be the case that we have less than $n$ goods. For such an instance, an allocation wherein each agent gets at most one good is both feasible and $\EFone$.}  We will prove that $\mathcal{A}$ is $\EFone$. Since $\mathcal{A} \in \mathcal{M}$, the stated claim follows. 

Say, for contradiction, that $\mathcal{A}$ is not an $\EFone$ allocation, then we will show that there exists another allocation $\mathcal{A}' \in \mathcal{M}$ along with agents $i$ and $j$, such that $A'_h=A_h$, for all other agents $h\in [n]\setminus\{i,j\}$, and $\min\{v(A'_i), v(A'_j)\}>\min\{v(A_i), v(A_j)\}$. The last inequality and the fact that $v(A'_i) + v(A'_j) = v(A_i) + v(A_j)$ imply $v(A'_i) \cdot v(A'_j) > v(A_i) \cdot v(A_j)$. Therefore, we get $\NSW(A')>\NSW(A)$, which contradicts the optimality of $\mathcal{A}$. 

By definition, if $\mathcal{A}$ is not $\EFone$, then there exists a pair of agents $i,j$ such that 
\begin{align*}
({\mathbf{E}}): & \ \ v(A_i)<v(A_j)-v(g) \qquad \text{ for all } g\in A_j.
\end{align*}

Note that if there exists a good $g$ in $A_j$ such that $A_i \cup \{g\}$ is independent (i.e., $A_i \cup \{g\} \in \mathcal{I}$), then swapping $g$ from $A_j$ to $A_i$ gives us the desired allocation $\mathcal{A}'$ (with a strictly higher $\NSW$ than $\mathcal{A}$). Hence, we analyze the case in which no such good exists. In particular, we have $A_i\cup \{g\}\notin \mathcal{I}$ for all $g\in A_j$. This condition implies that $|A_i| \geq |A_j|$; otherwise, the Independent Set Exchange property of matroids would ensure the existence of a good $g\in A_j$ such that $A_i\cup\{g\}\in\mathcal{I}$.

Write $t:=|A_j|$. 
%Let $\hat{A}_i$ denote a subset of $A_i$ of cardinality $t$ and hence, $\hat{A}_i$ is also independent by the Hereditary property of independent sets of matroids. Since $\hat{A}_i$ and $A_j$ are independent and $|\hat{A}_i| = |A_j|=t$, 
The independence of $A_i$ and $A_j$  along with the inequality $|A_i| \geq |A_j|=t$ imply that there exist $t$ component-wise distinct pairs of goods $\{ (g^i_z, g^j_z) \in (A_i, A_j) \mid z \in [t]\}$ such that $(A_i \setminus\{g^i_z\})\cup\{g^j_z\}$ and $(A_j\setminus\{g^j_z\})\cup\{g^i_z\}$ are independent for all $z \in [t]$. This fact is established in \cite{goemans2009lecture} for the case wherein $|A_i|=|A_j|=t$.  Lemma \ref{lemma:matching} (proved below) complements this result and guarantees the existence of the relevant $t$ component-wise distinct pairs of goods, $\{ (g^i_z, g^j_z) \in (A_i, A_j) \mid z \in [t]\}$, even when $|A_i| > |A_j|$. Next, we will complete the proof of the theorem using the pairs $\{ (g^i_z, g^j_z) \in (A_i, A_j) \mid z \in [t]\}$.

%However, in this work, we show the existence of $t$ component-wise distinct pairs of goods $\{ (g^i_z, g^j_z) \in (A_i, A_j) \mid z \in [t]\}$ even when $|A_i|\geq |A_j|$. Lemma \ref{lemma:matching} modifies the proof of \cite{goemans2009lecture} to establish the result for the case when $|A_i|\geq A_j$.

%Note that the size-$t$ set $\hat{A}_i$ is also independent by the Hereditary property of independent sets of matroids. In addition, t

Since the pairs in $\{ (g^i_z, g^j_z) \in (A_i, A_j) \mid z \in [t]\}$ are distinct, $\{g^j_z \mid z \in [t]\} = A_j$ and $\{g^i_z \mid z \in [t]\} = \hat{A}_i \subseteq A_i$. The envy between agent $i$ and $j$ (i.e., $v(A_j)>v(A_i)\geq v(\hat{A}_i)$) implies that there exists index $y \in [t]$ for which $v(g^j_y)>v(g^i_y)$.

Consider the allocation $\mathcal{A}'$ wherein  $A'_i=(A_i\setminus\{g^i_y\})\cup\{g^j_y\}$, $A'_j=(A_j\setminus\{g^j_y\})\cup\{g^i_y\}$, and $A'_h=A_h$ for all other agents. Note that all the bundles in $\mathcal{A}'$ are independent, $\mathcal{A}' \in \mathcal{M}$, and  $v(A'_i)=v(A_i)+v(g^j_y)-v(g^i_y)>v(A_i)$. Furthermore, $v(A'_j) = v(A_j)-v(g^j_y)+v(g^i_y)>v(A_i)$; the last inequality follows from $(\mathrm{\mathbf{E}})$. Hence, $\min\{v(A'_i), v(A'_j)\}>v(A_i)=\min\{v(A_i), v(A_j)\}$, which implies that the $\NSW$ of $\mathcal{A}'$ is higher than that of $\mathcal{A}$. This, by contradiction, completes the proof. 
\end{proof}

%We will use the notion of an \emph{exchange graph} to prove the next lemma. Specifically, for a matroid $([m], \mathcal{I})$ and an independent set $J \in \mathcal{I}$, the exchange graph $\mathrm{D}(J)$ is the defined as the bipartite graph, with bipartition $J$ and $[m] \setminus J$, that contains an edge from $x \in J$ to $y \in [m] \setminus J$ iff $ \left( J \setminus \{ x \} \right) \cup \{ y \} \in \mathcal{I}$. 

%$(x,y)\in M$, both $(J \setminus \{x\}) \cup \{y\}$ and $(I \setminus \{y\}) \cup \{x\}$ belongs to $\mathcal{I}$.

\begin{lemma}\label{lemma:matching}
Let $([m],\mathcal{I})$ be a laminar matroid with independent subsets $I , J \in \mathcal{I}$ which satisfy $I \cap J=\emptyset$ and $|I| \geq |J|$. Then, there exists a one-to-one map $\mu: J \mapsto I$ such that swapping any $j \in J$ with $\mu(j) \in I$ leads to independent subsets, i.e., for any element $j \in J$, both $(J \setminus \{ j \}) \cup \{ \mu(j) \}$ and $(I \setminus \{ \mu(j) \}) \cup \{ j \}$ belong to $\mathcal{I}$. 
\end{lemma}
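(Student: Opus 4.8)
The plan is to reduce the unequal-cardinality statement to the equal-cardinality symmetric-exchange fact already invoked in the proof of Theorem~\ref{theorem:identical} --- namely that any two \emph{disjoint} independent sets of the \emph{same} size admit a componentwise bijection realizing two-sided exchanges (the case $|I|=|J|$, attributed to~\cite{goemans2009lecture}). The idea is to pad $J$ up to the size of $I$ using elements of $I$ itself, and then pass to a contraction so that the padded instance becomes an equal-size \emph{disjoint} instance to which the cited result applies verbatim.

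First I would invoke the Independent Set Exchange axiom repeatedly: since $I,J\in\mathcal{I}$ with $|J|<|I|$ (the case $|J|=|I|$ being the known result), there is always an element of $I\setminus J=I$ that can be added to the current independent superset of $J$ while preserving independence, so iterating yields an independent set $J'=J\cup X$ with $X\subseteq I$ and $|J'|=|I|$. Because $I\cap J=\emptyset$, the common part of $I$ and $J'$ is exactly $X$. I would then pass to the contraction of the matroid by $X$, whose independent sets are the $Y\subseteq[m]\setminus X$ with $Y\cup X\in\mathcal{I}$. In this contracted matroid the sets $I\setminus X$ and $J=J'\setminus X$ are independent (since $(I\setminus X)\cup X=I$ and $J\cup X=J'$ lie in $\mathcal{I}$), are disjoint, and have equal cardinality $|I|-|X|=|J|$.

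Applying the cited equal-size, disjoint two-sided exchange inside the contraction yields a bijection $\mu:J\to I\setminus X$ such that, for every $j\in J$, both $\bigl((I\setminus X)\setminus\{\mu(j)\}\bigr)\cup\{j\}$ and $(J\setminus\{j\})\cup\{\mu(j)\}$ are independent in the contraction. Reinserting $X$ (a set is independent in the contraction iff its union with $X$ is in $\mathcal{I}$) turns the first set into $(I\setminus\{\mu(j)\})\cup\{j\}$ --- using $\mu(j)\in I\setminus X$ --- which therefore lies in $\mathcal{I}$; it turns the second into $(J'\setminus\{j\})\cup\{\mu(j)\}\in\mathcal{I}$, whence $(J\setminus\{j\})\cup\{\mu(j)\}\in\mathcal{I}$ by the Hereditary property. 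Since $\mu$ is injective with image contained in $I$, it is the desired map.

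The main obstacle is that the cited exchange result is stated only for equal-size, \emph{disjoint} independent sets, while padding $J$ forces the overlap $X$ with $I$; the contraction step is exactly what removes this overlap, and --- crucially --- it leaves $I$ intact, so the $I$-side exchange is recovered for the \emph{full} set $I$ rather than for a proper subset of it. I would take care to verify that reinserting $X$ produces $(I\setminus\{\mu(j)\})\cup\{j\}$ exactly (which needs $\mu(j)\notin X$, guaranteed as the image lies in $I\setminus X$) and that injectivity survives the restriction. A contraction-free alternative is to invoke the form of the symmetric-exchange theorem for non-disjoint equal-size independent sets, with the bijection fixing $X$ pointwise; the $J$-side conclusion then follows again by heredity after discarding $X$.
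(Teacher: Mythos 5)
Your proof is correct, and its opening move coincides with the paper's: repeatedly apply the exchange axiom against $I$ to pad $J$ to an independent $J'=J\cup X$ with $X\subseteq I$ and $|J'|=|I|$, and at the end recover the $J$-side conclusion by heredity. Where you genuinely diverge is in how the overlap $X=I\cap J'$ is handled. The paper invokes the result of \cite{goemans2009lecture} in its general, non-disjoint form --- for any two equal-cardinality independent sets there is a bijection between the set differences $J'\setminus I$ and $I\setminus J'$ realizing two-sided swaps --- so the overlap needs no special treatment: the bijection automatically avoids $X$ and the $I$-side swap is with the full set $I$. (The paper then re-derives that statement for completeness by truncating the matroid at rank $|I|$, so that $I$ and $J'$ become bases, and iterating the strong basis exchange lemma.) You instead assume only the disjoint equal-size case --- the form actually quoted in the proof of Theorem~\ref{theorem:identical} --- and restore disjointness by contracting by $X$, using that for independent $X$ the contraction's independent sets are exactly the $Y\subseteq[m]\setminus X$ with $Y\cup X\in\mathcal{I}$. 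Your translation back through the contraction is carried out correctly: you verify the two points on which it hinges, namely $\mu(j)\in I\setminus X$ (so that reinserting $X$ yields exactly $(I\setminus\{\mu(j)\})\cup\{j\}$) and $j\notin X$ (which holds since $J\cap I=\emptyset$), and the $J$-side follows by heredity from $(J'\setminus\{j\})\cup\{\mu(j)\}\in\mathcal{I}$. The two routes are essentially equivalent in strength --- contracting by the common part is a standard way to derive the non-disjoint exchange bijection from the disjoint one --- but yours is the more self-contained given the black box actually available in the theorem's text, at the cost of introducing one extra matroid operation, whereas the paper's is shorter but leans on (and must then reprove via truncation and strong basis exchange) the stronger form of the cited theorem. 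Your closing remark about a contraction-free alternative using the non-disjoint symmetric-exchange statement is, in fact, exactly the paper's route.
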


\begin{proof}
Let $C_1,\ldots, C_{\ell}$ be the laminar categories (or sets)---for any two indices $x,y\in\{1,\ldots,\ell\}$, only one of the following holds true: (i)~$C_x\subset C_y$, (ii)~$C_y\subset C_x$ or, (iii)~ $C_x\cap C_y=\emptyset$. Thus, the categories can be ordered in a way such that, for any $x<y$, either $C_x\subset C_y$ or $C_x\cap C_y=\emptyset$. This can be achieved by bottom-up level-order tree traversal\footnote{For example, if there are four categories such that $C_1\subset C_2 \subset C_4$, $C_3 \subset C_4$ and $C_3\cap C_2=\emptyset$, then the corresponding graph has three levels: $C_4$ in level $1$, $C_2$ and $C_3$ in level $2$ and $C_1$ is in level $3$. The bottom-up level-order tree traversal would give $(C_1, C_2, C_3, C_4)$ or $(C_1, C_3, C_2, C_4)$.} of a directed graph $G = (V,E)$ where each vertex $v\in V$ corresponds to a category $C_v$, and each directed edge $(y,x)\in E$ corresponds to the fact that $C_x\subset C_y$ and that no other index $z$ satisfies $C_x\subset C_z \subset C_y$. For ease of representation, we assume that $C_1,\ldots, C_{\ell}$ follow such an ordering. We use $\Psi(C_y)$ to denote the set of immediate sub-categories of $C_y$, i.e., $C_x\in \Psi(C_y)$ implies there is an edge $(y,x)\in E$ in the directed graph $G(V,E)$. Note that, for any vertex $x$, $|\Psi^{-1}(C_x)|\leq 1$ (no vertex has more than one incoming vertex). 

Now, let $I , J \in \mathcal{I}$ be two independent sets such that $I \cap J=\emptyset$ and $|I| \geq |J|$. %Let $\mu: J \mapsto I$ be the required one-to-one map such that swapping any $j \in J$ with $\mu(j) \in I$ leads to independent subsets. 
We establish the existence of the desired mapping $\mu:J\mapsto I$ in a constructive manner. At any step, when we assign $\mu(j)=i$, we call the element $j$, as well as $i$, as a \textit{matched} element. For any subset $S\subseteq J$, we write $\mu(S):=\cup_{j\in S}\ \mu(j)$ and thus, $|\mu(S)| = |S|$. During the construction, for each category $C_x$, we maintain a set $U_x$ of unmatched elements (of $I$ or $J$) belonging to the category $C_x$. The following invariants are maintained on $\mu$ and $U_x$ for each category $C_x$, based on the value $\delta_x:=|C_x\cap I|-|C_x\cap J|$ (we will prove that the invariants continue to hold during the construction). 
\begin{enumerate}
\item When $\delta_x=0$, all elements are matched, i.e., $U_x=\emptyset$ and the constructed map $\mu$ contains a mapping from each element in $C_x\cap J$ to an unique element in $C_x\cap I$. 
\item When $\delta_x>0$, the set of elements that remains unmatched belongs to $C_x\cap I$, i.e., $U_x\subset C_x\cap I$, $|U_x|=\delta_x$, and for the constructed map $\mu$, we have $U_x = (C_x\cap I)\setminus \mu(C_x\cap J)$.
\item When $\delta_x<0$, the set elements that remains unmatched belongs to $C_x\cap J$, i.e., $U_x\subset C_x\cap J$, $|U_x|=-\delta_x$, and for the constructed map $\mu$, we have $U_x = (C_x\cap J)\setminus \mu^{-1}(C_x\cap J)$.
\end{enumerate}

We obtain the required $\mu$ by repeating the following process for each category $C_x$, in increasing order of their index $x$ from $1$ to $\ell$.
\begin{enumerate}
\item If $\Psi(C_x)= \emptyset$, then the elements of $C_x$ have not been processed before. For such an $x$, we compute $\delta_x = |C_x\cap I| - |C_x\cap J|$ and execute the following:
\begin{itemize}
\item if $\delta_x=0$, then, for each $j\in (C_x\cap J)$, we set $\mu(j)$ to some unique $i\in(C_x\cap I)$, and set $U_x=\emptyset$.
\item if $\delta_x>0$, then, for each $j\in (C_x\cap J)$, we set $\mu(j)$ to some unique $i\in(C_x\cap I)$. However, there are $\delta_x$ elements from $C_x\cap I$ that remain unmatched and we store them in $U_x$. Thus, $|U_x|=\delta_x$.
\item if $\delta_x<0$, then, each $i\in(C_x\cap I)$ is assigned to some unique $j\in(C_x\cap J)$ such that $\mu(j)=i$. There are $\delta_x$ elements from $C_x\cap J$ that remain unmatched and we store them in $U_x$. Thus, $|U_x|=-\delta_x$. 
\end{itemize}
\item If $\Psi(C_x)\neq \emptyset$, then some elements of $C_x$ may have been already processed. Each processed element must belong to some $C_y\in \Psi(C_x)$ and must have been assigned to either $\mu$ or to $U_y$. The unprocessed (and hence unmatched) elements of $C_x$ is denoted as $\psi'(C_x):=C_x\setminus \left(\underset{y\in \Psi(C_x)}{\cup} C_y\right)$.  Assuming $\delta_x = |C_x\cap I| - |C_x\cap J|$, we group the categories in the set $\Psi(C_x)$ as:  
$\Delta^+=\{y: C_y\in \Psi(C_x)\ \mbox{ and }\ \delta_y>0\}$, 
$\Delta^-=\{y: C_y\in \Psi(C_x)\ \mbox{ and }\ \delta_y<0\}$, and
$\Delta^0=\{y: C_y\in \Psi(C_x)\ \mbox{ and }\ \delta_y=0\}$. 
Then, we obtain the following,
\begin{equation}
\delta_x = \sum_{y\in\Delta^+}\delta_y + \sum_{y\in\Delta^-}\delta_y + \sum_{y\in\Delta^0}\delta_y + |\psi'(C_x)\cap I| - |\psi'(C_x)\cap J|.\label{eq:delta}
\end{equation}

\begin{itemize}
\item if $\delta_x=0$, Equation~\ref{eq:delta} implies
\begin{eqnarray}
&& \sum_{y\in\Delta^+}\delta_y + |\psi'(C_x)\cap I| = \sum_{y\in\Delta^-}(-\delta_y) + |\psi'(C_x)\cap J|\nonumber \\
&\Rightarrow & \sum_{y\in\Delta^+}|U_y| + |\psi'(C_x)\cap I| = \sum_{y\in\Delta^-}|U_y| + |\psi'(C_x)\cap J|.\label{eq:transfer}
\end{eqnarray}
Equation~\ref{eq:transfer} shows that the number of unmatched elements of $I$ is equal to the number of unmatched elements of $J$. Also, for each $y\in\Delta^+$, the invariant on $U_y$ (which holds inductively) ensures that $U_y\subseteq C_y\cap I\subset C_x\cap I$ (since $C_y\subset C_x$) and, for each $y\in\Delta^-$, we have $U_y\subseteq C_y\cap J\subset C_x\cap J$. Therefore, for each $j\in \left(\cup_{y\in\Delta^-} U_y\right)\cup (\psi'(C_x)\cap J)$, we set $\mu(j)$ to some unique $i\in \left(\cup_{y\in\Delta^+} U_y\right)\cup (\psi'(C_x)\cap I)$. Since no elements of category $C_x$ remains unmatched after this point, we set $U_x = \emptyset$. 
\item if $\delta_x>0$, Equation~\ref{eq:delta} implies
\begin{eqnarray}
\sum_{y\in\Delta^+}|U_y| + |\psi'(C_x)\cap I| > \sum_{y\in\Delta^-}|U_y| + |\psi'(C_x)\cap J|.\label{eq:transfer_greater}
\end{eqnarray}
Equation~\ref{eq:transfer_greater} shows that the number of unmatched elements of $I$ is greater than the number of unmatched elements of $J$ and hence, for each $j\in \left(\cup_{y\in\Delta^-} U_y\right)\cup (\psi'(C_x)\cap J)$, we set $\mu(j)$ to some unique $i\in \left(\cup_{y\in\Delta^+} U_y\right)\cup (\psi'(C_x)\cap I)$. However, $\delta_x$ elements from $C_x\cap I$ remain unmatched and we assign them to $U_x$.
 \[U_x = \left(\left(\underset{y\in\Delta^+}{\cup} U_y\right) \cup \psi'(C_x)\right)\setminus\mu^{-1}(C_x\cap J).\] 
\item  Similarly, if $\delta_x<0$, Equation~\ref{eq:delta} implies
\begin{eqnarray}
\sum_{y\in\Delta^+}|U_y| + |\psi'(C_x)\cap I| < \sum_{y\in\Delta^-}|U_y| + |\psi'(C_x)\cap J|.\label{eq:transfer_lesser}
\end{eqnarray}
Equation~\ref{eq:transfer_lesser} shows that the number of unmatched elements of $I$ is less than the number of unmatched elements of $J$, and thus, for each $i\in \Delta^+\cup (\psi'(C_x)\cap I)$, we assign $\mu^{-1}(i)$ to some unique $j\in \Delta^-\cup (\psi'(C_x)\cap J)$. However, $\delta_x$ elements from $C_x\cap J$ remains unmatched and we assign them to $U_x$,
\[U_x =  \left(\left(\underset{y\in\Delta^-}{\cup} U_y\right) \cup \psi'(C_x)\right)\setminus\mu^{-1}(C_x\cap I).\] 
\end{itemize}
\end{enumerate}

The one-to-one property of $\mu$ is explicitly maintained during its construction. We now prove that all the elements of $J$ are matched. Let $H:=\{C_h,\ldots,C_\ell\}$ be the set of highest level categories, i.e., for each $C\in H$, $\Psi^{-1}(C)=\emptyset$. We create subgroups of $H$ depending on their $\delta$ values: 
$\Delta^+=\{y: C_y\in H\ \mbox{ and }\ \delta_y>0\}$, 
$\Delta^-=\{y: C_y\in H\ \mbox{ and }\ \delta_y<0\}$, and 
$\Delta^0=\{y: C_y\in H\ \mbox{ and }\ \delta_y=0\}$. Since $I = \underset{C\in H}{\cup} \ (C\cap I)$, $J = \underset{C\in H}{\cup} \ (C\cap J)$, and $|I|\geq |J|$, the following holds true: 
\begin{eqnarray}
&&\displaystyle \sum_{y\in \{h,\ldots,\ell\}}(|C_y\cap I|-|C_y\cap J|)\geq 0\nonumber\\
&\Rightarrow &\displaystyle \sum_{y\in \{h,\ldots,\ell\}}\ \delta_y\geq 0\nonumber\\
&\Rightarrow &\displaystyle \sum_{y\in \Delta^+}\delta_y + \sum_{y\in \Delta^-}\delta_y + \sum_{y\in \Delta^0}\delta_y \geq 0\nonumber\\
&\Rightarrow &\displaystyle \sum_{y\in \Delta^+}\delta_y \geq \sum_{y\in \Delta^-}(-\delta_y) \nonumber\\
&\Rightarrow &\displaystyle \sum_{y\in \Delta^+}|U_y| \geq \sum_{y\in \Delta^-}|U_y| \label{eq:transfer_top}
\end{eqnarray}
Equation~\ref{eq:transfer_top} implies that the number of unmatched elements of $I$ is greater than the number of unmatched elements of $J$, and thus, for each $j\in \left(\cup_{y\in\Delta^-} U_y\right)$, we assign $\mu(j)$ to some unique $i\in \left(\cup_{y\in\Delta^+} U_y\right)$. Since all elements belonging to categories in the set $\Delta^-$ get matched, we have that all the elements of $J$ is matched by the end of the procedure.

Finally, we show that swapping any $j \in J$ with $\mu(j) \in I$ leads to independent subsets, i.e., $I':= (I\cup \{j\})\setminus \{\mu(j)\}\in \mathcal{I}$ and $J':=(J\cup \{\mu(j)\})\setminus \{j\}\in \mathcal{I}$. Let $j\in J$ be an element which is mapped to $i\in I$ while processing the category $C_x$. Observe that both $i$ and $j$ belongs to category $C_x$. First, we consider the case when $i, j\in\psi'(x)$. Swapping $i$ and $j$ retains the total number of elements in category $C_x\cap I$ and $C_x\cap J$ and does not violate the threshold constraints. Next, we consider the case when $j\notin\psi'(x)$ (i.e., $j\in C_y\subset C_x$). Let $\{C_{1'},C_{2'},\ldots, C_{h'}, C_x\}$ denote the unique category-hierarchy for $j$, i.e., $j\in C_{1'}\subset C_{2'}\subset \ldots C_{h'} \subset C_x$. Since element $j$ remained unmatched till $C_x$, it must have propagated from $C_{1'}$ to $C_x$. This is possible only if, for each $y\in\{1',2'\ldots, h'\}$, $|C_y\cap I|-|C_y\cap J|=\delta_y<0$. Since, $|C_y\cap I|<|C_y\cap J|\leq k_y$, adding $j$ to $I$ do not violate the threshold $k_y$ for any $y\in\{1',2'\ldots, h'\}$. Now considering $i\notin\psi'(x)$, let $\{C_{1'},C_{2'},\ldots, C_{h'}, C_x\}$ denote the category-hierarchy for $i$. If $i$ remained unmatched till $C_x$, then it propagated from $C_1'$ to $C_x$ and hence, for each $y\in\{1',2'\ldots, h'\}$, $i\in U_y$ and $|C_y\cap I|-|C_y\cap J|=\delta_y>0$. Since $|C_y\cap J|<|C_y\cap I|\leq k_y$, adding $i$ to $J$ does not violate the threshold $k_y$ for any $y\in\{1',2'\ldots, h'\}$. Finally, the threshold constraint holds for $C_x$, since $|C_x\cap ((I\setminus\{i\})\cup \{j\})|=|C_x\cap I|\leq k_x$ and $|C_x\cap ((J\setminus\{j\})\cup \{i\})|=|C_x\cap J|\leq k_x$. Hence, the independence property is maintained. 

%Since $I$ and $J$ are independent, $t'$ is less than or equal to the threshold constraint $k_x$.  Thus, exchanging any $j\in C_x\cap J$ with any $i \in C_x\cap I$ will not violate the threshold guarantee, and both sets would remain independent. Thus, for each $j\in (C_x\cap J)\setminus \Omega$, we assign $\mu(j) = i$ such that $\cup_{j\in (C_x\cap J)\setminus \Omega} \mu(j)  = (C_x\cap I) \setminus \Omega$. Also, add all elements of $(C_x\cap J) \cup (C_x\cap I)$ to $\Omega$. Note that equal number of goods from $I$ and $J$ are added to $\Omega$ and hence the invariant is maintained. Since no goods of category $C_x$ remains unmatched, we assign $U_x \leftarrow 0$. 
%\item Consider the case when $|C_x\cap I|>|C_x\cap J|=t'$. After removing goods that are mapped, Since $|\Omega\cap \Psi(C_x)\cap I$ is equal to  $|\Omega\cap \Psi(C_x)\cap J$, for each $j\in (C_x\cap J)\setminus \Omega$, we assign $\mu(j) = i$ such that $|\cup_{j\in (C_x\cap J)\setminus \Omega} \mu(j)|$.   Note that $\sum_{j\in(C_x\cap I) \setminus \Omega} \mu(j)$
% In this case, some goods from $C_x\cap I$ may remain unmatched at this point. We assign $U_x \leftarrow (C_x\cap I)\setminus \Omega$. Also, 
%\item Consider the case when $|C_x\cap I|<|C_x\cap J|=t$. 
%\end{enumerate}
%by repeating the following process for each category $C_x$, in increasing order of their index $x$.

\end{proof}

Next we provide an efficient algorithm for finding an $\EFone$ allocation under identical, additive valuations and laminar matroid constraints. In particular, Theorem~\ref{theorem:identical_algo} provides a constructive proof of existence for $\EFone$ allocations in the current setup.

{
	\begin{algorithm}[!ht]
	{
		{{\bf Input :} A fair division instance $\left\langle[m],[n],v(\cdot),\mathcal{M}\right\rangle$ with identical, additive valuation, $v(\cdot)$, along with an oracle access to the underlying laminar matroid $([m],  \mathcal{I})$. \\
		{\bf Output:} A feasible $\EFone$ allocation.}
		\caption{SWAP ($\ALG$~\ref{alg:identical})}
		\label{alg:identical}
		\begin{algorithmic}[1]
			\STATE Initialize partition $\mathcal{A}=(A_1, \ldots, A_n)$ such that $\cup_{i\in[n]} A_i = [m]$ and $A_i\in\mathcal{I}$ for all $i\in[n]$.
			\STATE	Set $\ell = \displaystyle \argmin_{i \in [n]} v(A_i)$ \\ \COMMENT{$\ell$ is the index of the least valued bundle in $\mathcal{A}$}.
			\STATE Set $\mathcal{V} = \{ i \in [n] \mid v(A_i\setminus\{g\})>v(A_\ell) \mbox{ for all } g\in A_i\}$.  \\ \COMMENT{$\mathcal{V}$ is the set of indices $i$ such that the bundle $A_i$ violates $\EFone$ with respect to $A_\ell$}
			\WHILE{$\mathcal{V} \neq \emptyset$}
				\STATE Set $h = \displaystyle \argmax_{i \in \mathcal{V}} v(A_i)$ \\ \COMMENT{$h$ is the index of the highest valued bundle in $\mathcal{V}$}
				\IF{$|A_\ell|<|A_h|$}	
					\STATE Set $g = \displaystyle \argmax_{e \in A_h : A_\ell \cup \{ e \} \in \mathcal{I}} v(e)$ \\ \COMMENT{By the independent set exchange property of matroids, such a good $g$ is guaranteed to exist} 
					\STATE Update $A_h \leftarrow A_h \setminus \{g\}$ and $A_\ell  \leftarrow A_\ell \cup \{g\}$.
					%\STATE Update $A_l \leftarrow A_l \cup \{g\}$.
				\ELSE
					\STATE Let $\mu$ be the map obtained by applying Lemma~\ref{lemma:matching} to independent sets $A_\ell$ and $A_h$ \\ \COMMENT{Here, $|A_\ell| \geq |A_h|$ and we have $\mu: A_h \mapsto A_\ell$}.
					 
					%\STATE Set $P = \{(g'_z,g_z) \mid g'_z\in A_l, g_z\in A_h\}$ // \COMMENT{Obtained using Lemma \ref{lemma:matching}}
					%\STATE Sort indices $z$ such that $v(g_1)\geq v(g_2)\ldots\geq v(g_{|A_h|})$.
					%\STATE Find index $z\in\{1,\ldots,|A_h|\}$ such that $v(g_z)-v(g'_z)\geq \frac{1}{m^2} v(H)$.
					\STATE Find a good $ g \in A_h$ such that $v(g) - v(\mu(g)) \geq \frac{1}{m^2} v(A_h)$ \\ \COMMENT{The existence of such a good is established in Theorem~\ref{theorem:identical_algo}}
						\STATE Update $A_h \leftarrow (A_h \setminus \{g\})\cup \{ \mu(g) \}$ and $A_\ell \leftarrow (A_\ell \setminus \{ \mu(g) \})\cup \{g\}$.
				\ENDIF
			\STATE	Update $\ell \leftarrow \displaystyle \argmin_{i \in [n]} v(A_i)$ and $\mathcal{V} \leftarrow \{ i \in [n] \mid v(A_i\setminus\{g\})>v(A_\ell) \mbox{ for all } g\in A_i\}$. 
			\ENDWHILE 			
			\STATE Return $\mathcal{A}=(A_1,\ldots,A_n)$.
		\end{algorithmic}}
	\end{algorithm}
}

\begin{theorem}\label{theorem:identical_algo}
Given any fair division instance $\left\langle[m],[n],(v_i)_{i\in[n]}, \mathcal{M}\right\rangle$ with additive, identical valuations and laminar matroid constraint, $\mathcal{M} \neq \emptyset$, Algorithm \ref{alg:identical} finds an $\EFone$ allocation in polynomial time.
\end{theorem}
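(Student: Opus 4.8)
The plan is to show that Algorithm~\ref{alg:identical} (SWAP) terminates in polynomial time with an $\EFone$ allocation, by analyzing the two invariants maintained through the while loop: feasibility ($A_i \in \mathcal{I}$ for all $i$) and a potential-function argument for termination. First I would verify the loop invariant that $\mathcal{A}$ always remains in $\mathcal{M}$: in the \textbf{if} branch ($|A_\ell| < |A_h|$), the Independent Set Exchange property guarantees a good $g \in A_h$ with $A_\ell \cup \{g\} \in \mathcal{I}$, and removing $g$ from the independent set $A_h$ preserves independence by the hereditary property; in the \textbf{else} branch ($|A_\ell| \geq |A_h|$), Lemma~\ref{lemma:matching} supplies the map $\mu: A_h \mapsto A_\ell$ whose swap property ensures both $(A_h \setminus \{g\}) \cup \{\mu(g)\}$ and $(A_\ell \setminus \{\mu(g)\}) \cup \{g\}$ stay independent. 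Thus $\mathcal{A} \in \mathcal{M}$ throughout, so the final allocation is feasible. Correctness upon termination is immediate: the loop exits precisely when $\mathcal{V} = \emptyset$, i.e. when no bundle violates $\EFone$ with respect to the minimum-valued bundle $A_\ell$; I would then argue that $\EFone$ against the globally least-valued bundle suffices to give $\EFone$ between every pair, since any agent $i$ envies $j$ up to one good whenever even the poorest agent does.

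The heart of the argument is termination in polynomially many iterations, and this is where the two branches demand different treatment. In the \textbf{if} branch we \emph{move} a good from $A_h$ to $A_\ell$ without compensating $A_h$, so $|A_\ell|$ strictly increases; since bundle sizes are bounded by $m$, this branch can fire only polynomially often before sizes stabilize. The genuinely delicate case is the \textbf{else} branch, a pure swap that preserves cardinalities and only redistributes value. Here I would track the Nash Social Welfare (equivalently $\sum_i \log v(A_i)$, or the product $\prod_i v(A_i)$) as a monotone potential. The key inequality flagged in Step~11 of the algorithm is the existence of a good $g \in A_h$ with $v(g) - v(\mu(g)) \geq \frac{1}{m^2} v(A_h)$; I expect this to follow from the envy condition (which forces $v(A_h) - v(A_\ell) > v(g')$ for every $g' \in A_h$, hence a large gap $v(A_h) - v(A_\ell) \geq v(A_h)/2$ roughly) combined with averaging over the $|A_h| \leq m$ matched pairs under $\mu$, since $\sum_{g} (v(g) - v(\mu(g))) = v(A_h) - v(\mu(A_h)) \geq v(A_h) - v(A_\ell)$ is itself a constant fraction of $v(A_h)$.

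Given that inequality, the swap raises $v(A_\ell)$ and lowers $v(A_h)$ while preserving their sum, so by the same concavity/AM-GM reasoning used in the proof of Theorem~\ref{theorem:identical} the product $v(A_\ell)\cdot v(A_h)$ strictly increases, and with it the Nash Social Welfare of the whole allocation. To convert strict increase into a \emph{polynomial} bound on the number of iterations, I would quantify the multiplicative gain: the guaranteed value transfer of at least $\frac{1}{m^2} v(A_h)$ from the high bundle to the strictly-smaller-valued bundle $A_\ell$ yields a multiplicative improvement in $\prod_i v(A_i)$ of at least some factor $\left(1 + \Omega\!\left(\tfrac{1}{m^2}\right)\right)$ (after relating $v(A_h)$ to $v(A_\ell)$ via the envy inequality). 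Since the Nash Social Welfare lies in a bounded range determined by the input values and can only increase, taking logarithms bounds the total number of \textbf{else} iterations by a polynomial in $m$ and the bit-complexity of the valuations. The main obstacle I anticipate is precisely this quantitative step: establishing the $\frac{1}{m^2}$ lower bound on a single beneficial swap and then showing it compounds to a polynomial iteration count, rather than merely proving monotone improvement (which alone would only guarantee finite, not polynomial, termination). Each iteration is clearly polynomial-time—it requires one matroid-independence scan, one application of Lemma~\ref{lemma:matching}, and recomputation of $\ell$ and $\mathcal{V}$—so bounding the iteration count is the sole remaining ingredient for the claimed polynomial running time.
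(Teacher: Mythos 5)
Your feasibility invariant, your reading of the termination condition (no violator against the least-valued bundle implies pairwise $\EFone$ under identical valuations), and your reconstruction of the Step-11 inequality are all essentially right and match the paper --- though note that the envy condition only gives $v(A_h)-v(A_\ell) > v(g_1) \geq \frac{1}{|A_h|}v(A_h)$ (the maximum good is at least the average), not ``$v(A_h)-v(A_\ell)\geq v(A_h)/2$ roughly''; the $\frac{1}{m^2}$ bound then follows by averaging over the $|A_h|$ matched pairs, exactly as the paper does. The genuine gaps are both in the iteration count. First, your bound for the \textbf{if} branch is not a valid potential argument: the transferred good leaves $A_h$ and enters $A_\ell$, so $\sum_i |A_i|$ is conserved, and since the identities of $\ell$ and $h$ are recomputed every iteration, there is no a priori reason for ``sizes to stabilize'' --- a good could in principle ping-pong. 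What actually bounds this branch in the paper is a pair of structural facts you never establish: a violator is never the minimum-valued bundle, so while an index remains in $\mathcal{V}$ its bundle never gains goods and its value never increases; and the update never creates new violators, i.e.\ $\mathcal{V}(\mathcal{A}')\subseteq\mathcal{V}(\mathcal{A})$ (because the minimum value is non-decreasing and the reduced bundle still satisfies $v(A'_h)\geq v(A_\ell)$). With these, each index can shed cardinality as the chosen $h$ at most $m$ times, and once it leaves $\mathcal{V}$ it never returns.

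Second, and more seriously, the Nash-social-welfare potential does not yield a polynomial bound, and your claimed per-swap gain of $\left(1+\Omega\!\left(\frac{1}{m^2}\right)\right)$ is false in general. With $\delta = v(g_x)-v(\mu(g_x))$, the gain in the product of the affected pair is $\delta\left(v(A_h)-v(A_\ell)-\delta\right) = \delta\left(v(A'_h)-v(A_\ell)\right)$, and the $\EFone$ violation only guarantees $v(A'_h)-v(A_\ell) > v(\mu(g_x)) \geq 0$; this gap can be exponentially small in the bit complexity of the valuations even when $\delta \geq \frac{1}{m^2}v(A_h)$, so monotonicity of $\NSW$ gives only finite, not polynomial, termination --- precisely the obstacle you flag but do not resolve. (The potential is also ill-defined at initialization if some bundle has value zero, and in the \textbf{if} branch with $v(g)=0$ the $\NSW$ does not move at all.) The paper sidesteps $\NSW$ entirely in the algorithmic proof: it uses $v(A_h)$ itself as a per-index potential. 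Each time index $i$ is selected as the highest-valued violator, either $|A_i|$ drops by one or $v(A_i)$ drops by a multiplicative factor of $\left(1-\frac{1}{m^2}\right)$, and $v(A_i)$ cannot increase while $i\in\mathcal{V}$; combined with the containment $\mathcal{V}(\mathcal{A}')\subseteq\mathcal{V}(\mathcal{A})$, any index can be selected as $h$ at most $m^2k+m$ times for $k$ polynomial in the input size, which gives the polynomial running time. You would need to replace your $\NSW$ argument with this per-index, value-decay accounting (or something equivalent) for the proof to go through; as written, the termination analysis fails. A minor omission: polynomial-time computability of the initial partition into independent sets (the paper invokes Gabow--Westermann) should also be stated.
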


\begin{proof}
The work of Gabow and Westermann~\cite{gabow1992forests} provides an efficient algorithm to compute a partition $\mathcal{A}=(A_1,\ldots,A_n)$ of the  $m$ goods such that $A_i \in \mathcal{I}$, for all $i \in [n]$. Algorithm~\ref{alg:identical} starts with such a partition and then iteratively reallocates goods between the bundles till an $\EFone$ allocation is obtained. The algorithm maintains the invariant that at any point of time the partition in hand is in $\mathcal{F}$. That is, throughout the execution of the algorithm the bundles allocated to the agents remain independent.

Let $\mathcal{A}=(A_1, A_2, \ldots, A_n)$ be a partition considered by the algorithm during its execution and let $\ell$ denote the index of the least valued bundle in $\mathcal{A}$, i.e., $\ell := \argmin_{i \in [n]} v(A_i)$. Also, let $\mathcal{V}(\mathcal{A})$ denote the set of indices of the bundles that violate $\EFone$ with respect to the least valued bundle $A_\ell$:
\begin{align*}
\mathcal{V}(\mathcal{A}) & := \{ i \in [n] \mid v(A_i) -v(g)> v(A_\ell) \mbox{ for all } g \in A_i\}.
\end{align*}

Note that allocation $\mathcal{A}$ is $\EFone$ iff $\mathcal{V}(\mathcal{A})=\emptyset$. We will use $A_h$ to denote the highest valued bundle which violates the $\EFone$ condition with respect to $A_\ell$, i.e., $h := \argmax_{i \in \mathcal{V}(\mathcal{A}) } \ v(A_i)$. By definition of the the violating set, we have,
%Note that if $\mathcal{V}(\mathcal{A})=\emptyset$ then $\EFone$ guarantee is ensured between every pair of agents $i,j\in\mathcal{N}$.
%Write $H:=\displaystyle \argmax_{A_i\in \mathcal{V}(\mathcal{A})}{v(A_i)}$ to denote the the highest valued bundle violating $\EFone$ with the bundle $L$. Thus, we have:
\begin{align}\label{eq:EFone_H_L}
v( A_h \setminus \{g\}) & > v(A_\ell) \quad \mbox{for all } g\in A_h
\end{align}

Algorithm~\ref{alg:identical} aims to reallocate good(s) between $A_h$ and $A_\ell$ such that the resulting bundles remain independent. We will prove (via case analysis) that the intended reallocation is always possible and it results in either (i) a decrease in the cardinality of $A_h$ by one, or (ii) a drop in the value of $A_h$ by a multiplicative factor of  $\left(1-\frac{1}{m^2}\right)$. 

\begin{enumerate}
\item[] \textbf{Case (i)}: $|A_\ell| < |A_h|$. Since $A_\ell, A_h \in\mathcal{I}$, by the independence set exchange property, there exists a good $g \in A_h$ such that $A_\ell \cup\{g\} \in\mathcal{I}$. Algorithm \ref{alg:identical} transfers such a good $g$ from $A_h$ to $A_\ell$, and in this case the cardinality of $A_h$ drops by one. %Since $L$ was the least valued bundle in $\mathcal{A}$, so no new pair of agents is created that violates $\EFone$ in the new allocation $\mathcal{A}'$. 
\item[] \textbf{Case (ii)}: $|A_\ell| \geq |A_h|$. Let $t:=|A_h|$ and let $A_h=\{ g_z\}_{z \in [t]}$. Without loss of generality, assume that the goods in $A_h$ are indexed such that $v(g_1)\geq v(g_2)\geq \ldots\geq v(g_t)$. Lemma~\ref{lemma:matching} provides $t$ distinct pair of goods $\{ (g_z, \mu(g_z)) \in (A_h, A_\ell) \mid z \in [t]\}$ such that  $(A_h \setminus\{g_z\})\cup\{ \mu(g_z)\}$ and $(A_\ell \setminus\{ \mu(g_z) \})\cup\{g_z\}$ are independent for all $z \in [t]$.  We show that among these pairs there exists one, $(g_x, \mu(g_x))$, which satisfies $$v(g_x)-v( \mu(g_x)) \geq \frac{1}{t^2} v(A_h).$$

The fact that $A_h$ and $A_\ell$ violate the $\EFone$ condition (\ref{eq:EFone_H_L}) for the good $g_1$, in particular, implies  $v(A_h) - v(A_\ell)> v(g_1) $. Since $v(A_h) = \sum_{z=1}^t v(g_z)$ and $v(A_\ell) > \sum_{z=1}^t v(\mu(g_z))$, we have $\sum_{z=1}^t\left(v(g_z)-v( \mu(g_z)) \right)> v(g_1) $. An averaging argument ensures that there exists an index $x$ such that 
\begin{align}
v(g_x)-v( \mu(g_x)) & > \frac{1}{t} v(g_1) \nonumber \\
& \geq \frac{1}{t^2} v(A_h) \qquad \text{($g_1$ is the highest valued good in $A_h$)} \label{eq:lower_bound_on_delta} 
\end{align}
Since $t \leq m$, the selected pair $(g_x, \mu(g_x))$ satisfies the required property $v(g_x)-v( \mu(g_x)) \geq \frac{1}{m^2} v(A_h)$. Therefore, 
\begin{align*}
v(A_h) - v(g_x) + v(\mu(g_x)) & < \left(1-\frac{1}{m^2}\right) v(A_h)
\end{align*}

Lemma~\ref{lemma:matching} ensures that the bundles $(A_h \setminus \{g_x\}) \cup \{ \mu(g_x) \}$ and $(A_\ell \setminus \{ \mu(g_x) \} ) \cup \{ g_x\}$ are independent. Hence, whenever $|A_h| \leq |A_\ell|$, Algorithm~\ref{alg:identical} can swap goods between $A_h$ and $A_\ell$ such that the bundles remain independent and the value of the bundle $A_h$ drops by a multiplicative factor of at least $\left(1 - \frac{1}{m^2} \right)$.
\end{enumerate}

Consider an iteration wherein the algorithm updates partition $\mathcal{A}=(A_1, A_2, \ldots, A_n) \in \mathcal{F}$ to obtain partition $\mathcal{A}'=(A'_1, A_2, \ldots, A'_n) \in \mathcal{F}$. This update decreases the valuation of the bundle $A_h$ and increases the valuation of $A_\ell$: $v(A'_h) < v(A_h)$ and $v(A'_\ell) > v(A_\ell)$. For all the remaining bundles the valuation remains unchanged:  $v(A'_i) = v(A_i)$ for all $i \in [n] \setminus \{h, \ell\}$. 

Since $A_h$ violates $\EFone$ and the algorithms reallocates one good from $A_h$, we have $v(A'_h) \geq v(A_\ell)$. Therefore, $\min_{i \in [n] } v(A'_i) \geq \min_{i \in [n]} v(A_i)$. Using this observation and the fact that $A_\ell$ is the least valued bundle in $\mathcal{A}$, we get that the algorithm's update does not create new violators with respect to the least valued bundle: $\mathcal{V}(\mathcal{A}') \subseteq \mathcal{V}(\mathcal{A})$. 

This containment ensures that if an index $i$ is dropped from the violating set $\mathcal{V}(\mathcal{A})$ then it is never included back in the set in any later iteration. Next, we will show that an index can be the highest valued bundle, i.e., $h$, in $\mathcal{V}(\mathcal{A})$ for at most a polynomial number of iterations. This, overall, implies that the algorithm finds an $\EFone$ allocation in polynomial time. 

Note that every time an index $i$ is selected as the highest valued bundle in $\mathcal{V}(\mathcal{A})$ either the cardinality of $A_i$ decreases by one or its valuation drops  by a multiplicative factor of  $\left(1-\frac{1}{m^2}\right)$. Furthermore, while $A_i$ continues to violate the $\EFone$ condition with respect to the least valued bundle its value cannot increase in an update (only the least valued bundle experiences an increase in valuation). Hence, after $i$ is selected as the highest valued bundle in the violating set, say, $m^2 k + m$ times, $v(A_i)$ is at most $\left(1 - \frac{1}{m^2} \right)^{m^2 \ k}  \leq \frac{1}{e^k}$ times its initial value. Therefore, for a polynomially large $k$, we get that any index $i$ cannot be the highest valued bundle in the violating set more than $m^2 k + m$ times. Overall, this establishes that the algorithm terminates after a polynomial number of iterations and the stated claim follows.
\end{proof}

\section{Conclusion and Future Work}
This paper extends the active line of work on fair division of indivisible goods and shows that fairness guarantees are not lost by imposing cardinality constraints. In particular, we show that $\EFone$ allocations are guaranteed to exist even under cardinality constraints. Note that, though the round-robin method~\cite{caragiannis2016unreasonable} and cycle-elimination algorithm \cite{lipton-envy-graph} efficiently find $\EFone$ allocations in the unconstrained setting, these algorithms can lead to an allocation which does not satisfy the cardinality constraints. In this paper we bypass this issue by combining the round-robin method with envy graphs in an interesting manner. 

The universality of $\EFone$ is further strengthened by our result which shows that if the agents' valuations are identical (and additive), then fair ($\EFone$) allocations exist even under laminar matroid constraints. Establishing such a guarantee---along with computational results---for heterogeneous valuations (additive and beyond) remains an interesting direction of future work.\footnote{In and of itself, the approach used in Section~\ref{sec:identical} does not establish such a guarantee: one can construct instances where the agents have additive (but different) valuations and the allocation that maximizes the Nash social welfare, subject to cardinality constraints, is not $\EFone$.} 

We also show that, even under cardinality constraints, approximate maximin fair allocations always exist and can be computed efficiently. This result is obtained by reducing the constrained version of the problem (with additive valuations) to an unconstrained one (with submodular valuations). This reduction might be useful for addressing other constrained, fair division problems.   

Going forward it would be quite relevant to study fair division, both unconstrained and constrained, among strategic agents. 

\subsubsection*{Acknowledgments.} The authors thank Fahad Panolan for extremely useful discussions which lead to the proof of Theorem~\ref{theorem:identical_algo}. Arpita Biswas gratefully acknowledges the support of a Google PhD Fellowship Award. Siddharth Barman was supported by a Ramanujan Fellowship (SERB - {SB/S2/RJN-128/2015}). 

\bibliographystyle{alpha}
\bibliography{MMSbib}

\end{document}